\providecommand{\customgenericname}{}
\newcommand{\newcustomtheorem}[2]{%
  \newenvironment{#1}[1]
  {%
   \renewcommand\customgenericname{#2}%
   \renewcommand\theinnercustomgeneric{##1}%
   \innercustomgeneric
  }
  {\endinnercustomgeneric}
}
\newtheorem{observation}{Observation}
\renewcommand{\fnum@figure}{Figure \thefigure}
\title{Complexity of the Multiobjective Spanner Problem}
\begin{document}
\title{Complexity of the Multiobjective Spanner Problem}
%
%
\author{Fritz Bökler\orcidID{0000-0002-7950-6965} \and
        Henning Jasper\orcidID{0000-0002-9821-8600}
}
\authorrunning{Fritz Bökler and Henning Jasper}
%
\institute{Universität Osnabrück, Osnabrück Niedersachen 49074, Germany\\
\email{\{fboekler, hejasper\}@uos.de}}
\maketitle              
\begin{abstract}
In this paper, we take an in-depth look at the complexity of a hitherto unexplored \emph{Multiobjective Spanner (MSp)} problem. The MSp is a multiobjective generalization of the well-studied Minimum t-Spanner problem. This multiobjective approach allows us to find solutions that offer a viable compromise between cost and utility. Thus, the MSp can be a powerful modeling tool when it comes to the planning of, e.g., infrastructure. We show that for degree-3 bounded outerplanar instances the MSp is intractable and computing the non-dominated set is \textbf{BUCO}-hard. Additionally, we prove that if $\mathbf P \neq \mathbf {NP}$, neither the non-dominated set nor the set of extreme points can be computed in output-polynomial time, for instances with unit costs and arbitrary graphs. Furthermore, we consider the directed versions of the cases above.

\keywords{Multiobjective Optimization  \and Graph Spanners \and Output-Sensitive Complexity \and Extreme Points \and Parametric Optimization}
\end{abstract}
\section{Introduction}
The \emph{Multiobjective Spanner (MSp)} problem is a multiobjective generalization of the \emph{Minimum t-Spanner} problem.
Given a connected, simple graph $G=(V,E)$ where every edge has a cost and length of $1$, a subset of edges $S$ is a \emph{t-spanner} of $G$ if for every pair of vertices $u,v \in V$, $\frac{d^S(u,v)}{d^E(u,v)} \leq t$ holds, with $d^S(u,v)$ being the distance from $u$ to $v$ in $S$ and $d^E(u,v)$ their respective distance in $E$ \cite{peleg1989}. For a given graph, the problem of finding the cheapest t-Spanner, with regard to the sum over all edge-costs, is commonly known as the Minimum t-Spanner problem. We refer to $t$ as the \emph{stretch factor}.
The MSp generalizes the Minimum t-Spanner problem by introducing two edge-weight functions, allowing us to assign each edge a cost independent of its length. Furthermore, in contrast to the Minimum t-Spanner problem, the goal of the MSp is not to find a minimum weight spanner for a given stretch factor. Instead, the stretch factor is another objective we aim to minimize. The stretch factor is an interesting objective function in itself that is to be minimized in, e.g., the \emph{Minimum Max-Stretch Spanning Tree (MMST)} problem \cite{corneil1995}.
Feasible solutions for the MSp are defined less restrictive than t-Spanners. We define a \emph{spanner} of a connected, undirected graph $G=(V,E)$, as a subset of edges $S \subseteq E$, such that $G'=(V,S)$ is a connected subgraph.
Since the MSp is a \emph{Multiobjective Combinatorial Optimization (MOCO)} problem, solutions are mapped to a \emph{value vector} instead of a single value. Due to conflicts among objectives, there does not necessarily exist a solution that achieves the best value in all objective functions simultaneously. Instead, we look for value vectors for which there are no other value vectors that dominate them. This set of value vectors is called \emph{non-dominated set} (or Pareto-front) $\mathcal{Y}_N$.

\begin{definition}[Multiobjective Spanner (MSp) Problem]
The input is a connected, undirected graph $G=(V,E)$ and edge-weight functions $c_1\colon E \rightarrow \mathbb{Z}$ and $c_2\colon E \rightarrow \mathbb{N}_+$. Feasible solutions are spanners $S$ of $G$ and are assessed based on the two objective functions
\[ f_1(S)= \sum_{e \in S} c_1(e) \text{ and } f_2(S) = \max_{u,v \in V} \frac{d^S_{c_2}(u,v)}{d^E_{c_2}(u,v)},\]
with $d^S_{c_2}(u,v)$ and $d^E_{c_2}(u,v)$ being the length of the shortest u-v-path in $S$ and $E$ respectively, regarding the $c_2$-length. We consider an instance of the MSp to be solved if we output its non-dominated set $\mathcal{Y}_N$.
\end{definition}
A possible field of application for the MSp is the planning of emergency infrastructure.
Natural disasters require a significant logistical effort to provide relief to victims and to distribute equipment and humanitarian goods. The efficient design of emergency infrastructure therefore forms the basis for initial responses, as well as for long-term measures taken to stabilise affected communities.
Many optimization models used in emergency logistics only focus on either cost-effectiveness or responsiveness \cite{chen2020, boonmee2017, ozdamar2004}, as multiobjective approaches are considered too computationally expensive to solve. A recent literature review by Caunhye et al. \cite{caunhye2012} concluded, that these singleobjective models may hamper relief services by causing an oversupply of resources leading to difficulty with coordination, greater traffic, and complex scheduling. Therefore, the MSp could help to address this shortcoming.

The concept of t-Spanners and their related problems were first introduced by Peleg, Schäffer and Ullman in the context of synchronization in distributed systems and communication networks \cite{peleg1989, Ullman1989} and have since been explored in a variety of publications. 
A greedy algorithm with one of the best cost-guarantees was developed by Althöfer \cite{althofer1993}. For a graph $G$ and a stretch factor of $t=2k-1 (k \in \mathbb{N}_{\geq 1})$, it creates a t-Spanner $S$ of $G$ containing $\mathcal{O}(n^{1+1/k})$ edges in $\mathcal{O}(m(n^{1+1/k}+n \log n))$. This algorithm can even be applied to the \emph{Weighted Minimum t-Spanner} problem, where every edge is assigned an arbitrary positive cost. Then the algorithm additionally guarantees that $S$ has a cost of at most $\mathcal{O}(n/k)$ times the cost of the minimum spanning tree of $G$.
For undirected Minimum 2-Spanners, Kortsarz and Peleg \cite{kortsarz1994} published a $\mathcal{O}(\log(m/n))$-approximation with a theoretical running time of $\mathcal{O}(m^2 n^2 \log(n^2 /m))$.
Baswana and Sen \cite{baswana2007} give a method that, for a weighted, undirected graph, computes a $t=2k-1$ spanner $S$ that contains at most $\mathcal{O}(kn^{1+1/k})$ edges in an expected running time of $\mathcal{O}(km)$, but with no cost guarantee. For unweighted graphs the size of $S$ is bounded by $\mathcal{O}(n^{1+1/k}+kn)$.
Cai and Keil \cite{CaiKeil1994} focused on the complexity of the Minimum t-Spanner problem for degree bounded graphs and showed, among others, that if the maximum degree of the graph is at most $4$, the Minimum 2-Spanner problem can be solved in linear time, whereas the problem is \textbf{NP}-hard even if the maximum degree is at most $9$.
A recent paper by Kobayashi \cite{kobayashi2018} focuses on the complexity of the Minimum t-Spanner problem in planar graphs and, as a byproduct, improves the degree bounds for \textbf{NP}-hardness found by Cai and Keil.

As many decisions require the consideration of multiple goals and conflicting demands, MOCO problems are an important modelling tool in a variety of fields. Practical applications include routing problems in public transport \cite{delling2015, wagner2017}, the planning of radiotherapy \cite{hamacher1999, thieke2007, giantsoudi2013} and the determination of control strategies for vaccine administration in COVID-19 pandemic treatment \cite{libotte2020}.
In the multiobjective context, a problem is called \emph{intractable}, if there is no algorithm capable of solving it in polynomial time \cite{ehrgott2005}.
Due to the exponential size of their non-dominated sets, many interesting MOCO problems are intractable, e.g., multiobjective variants of the Traveling Salesperson \cite{emelichev1992}, Shortest Path \cite{hansen1980} or Spanning Tree \cite{hamacher1994} problem. Therefore, it makes sense to consider a complexity class that distinguishes between problems that cannot be solved in polynomial time due to the size of their output and the ones that are genuinely hard to solve. Moreover, in experimental studies, the non-dominated sets are much smaller (e.g., \cite{BC20}). There is also a theoretical reason for this behavior: In a smoothed analysis setting, Brunsch and Röglin showed that the size of the non-dominated set is at most polynomial in the input size for each fixed number of objectives \cite{BR15}. 
We say a MOCO problem $O$ is solvable in \emph{output-polynomial time} if there is an algorithm that, for any given instance $I$ of $O$ outputs every $y \in \mathcal{Y}_N$ exactly once, in polynomial time depending on the size of the input $I$ and the output $\mathcal{Y}_N$ \cite{johnson1988}. Such an algorithm is called \emph{output-polynomial}. We denote the class of problems for which an output-polynomial algorithm exists as \textbf{OP}.

An interesting subset of the non-dominated set is the set of extreme points $\mathcal{Y}_X$.
Since making decisions based on a potentially exponentially sized non-dominated set is generally not practical, many MOCO problems are approached by combining all the objective functions into one singleobjective scalar (or preference) function. One method to accomplish this is called \emph{weighted sum scalarization (WSS)}, where each objective function is weighted according to its importance. Note that in general not all non-dominated points can be found in this way.
The extreme points of a MOCO problem instance are exactly the points that can be the solution to any WSS of the instance. Therefore, if every decision maker has a linear preference function, computing the extreme points suffices. Note, however, determining weights accurately reflecting the preferences of the deciders is not trivial.
As every extreme point is non-dominated, while not every non-dominated value vector is an extreme point, solving the MSp could be hard, while the problem of only computing the set of extreme points could be in \textbf{OP}. This is the case for, e.g., \emph{Multiobjective Shortest Path} \cite{Bokler2015, Bokler2017}.
For more information on MOCOs and related topics cf. the book by M. Ehrgott \cite{ehrgott2005}.

\vspace{-3pt}
\subsubsection{Contribution and Organisation.}
In the remainder of this paper, we first give some definitions and establish basic concepts and results in \Cref{section:Preliminaries}. 
In \Cref{section:Intractability}, we study the classic tractability of MSp.
\begin{theorem}\label{theorem:MSp_intractable}
MSp is intractable even on degree-3 bounded outerplanar graphs.
\end{theorem}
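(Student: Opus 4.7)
The plan is to exhibit an explicit family $\{G_k\}_{k \geq 1}$ of outerplanar, degree-$3$ bounded graphs whose non-dominated sets grow at least as fast as $2^{\Omega(k)}$, which directly establishes intractability of the MSp on this graph class. The first step is to fix a small gadget $H$ on a constant number of vertices with maximum degree $3$---for example, a triangle $a_ib_ic_i$ equipped with an additional chord, or a small four-vertex diamond---and build $G_k$ by chaining $k$ copies of $H$ using single bridge edges. Outerplanarity and the degree-$3$ bound are then immediate from the construction.

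Next, I would define inside each gadget a binary local decision $c_i \in \{0,1\}$ that selects which of two edge subsets is included (e.g., keeping the chord or replacing it by the triangle's two path edges). The per-gadget $c_1$-weights are chosen with positional significance, such as distinct powers of two, so that $f_1$ becomes an injective function of the choice vector $(c_1, \ldots, c_k)$ and hence takes $2^k$ distinct values. The $c_2$-weights must be chosen so that, jointly with $c_1$, the $2^k$ choice vectors are mapped to $2^k$ pairwise incomparable points in the $(f_1, f_2)$-plane.

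The main obstacle, and the technical core of the construction, is forcing $f_2$ to take $2^{\Omega(k)}$ distinct values. A naive chained construction makes each gadget's local decision introduce a local pair-stretch, so that $f_2$ is attained at the single worst such local stretch, giving only $O(k)$ distinct stretch values and collapsing the Pareto front. The plan to circumvent this is to design $c_2$ so that $f_2$ is always attained on a fixed long-range pair $(s,t) = (a_1, c_k)$, with $d_{c_2}^S(s,t)$ equal to an accumulation $\sum_i \alpha_i(c_i)$ whose per-gadget increments scale as pairwise distinct powers of two. Two conditions must be arranged simultaneously: (i) every local pair-stretch produced by a gadget's binary decision stays strictly below the contribution that decision makes to the $(s,t)$-stretch; and (ii) $d_{c_2}^E(s,t)$ remains small enough that distinct choice vectors yield distinct global stretches. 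A natural attempt at (i) is to inflate all internal $c_2$-weights by a common large integer factor $M$, so that local stretches collapse towards $1$ while the global contribution remains dominant, but verifying this balance under the constraints of positive integer $c_2$-weights and of degree-$3$ outerplanarity is where I expect the bulk of the work to lie.

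Once such a weight design is in place, it remains to verify that the $2^k$ resulting value vectors $(f_1(S), f_2(S))$ are pairwise incomparable; this follows from the independent, injective encoding of $(c_1, \ldots, c_k)$ in both objectives. One then concludes $|\mathcal{Y}_N(G_k)| \geq 2^k$ with $|V(G_k)| = \Theta(k)$, establishing intractability of the MSp on degree-$3$ bounded outerplanar graphs.
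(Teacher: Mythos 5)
Your overall strategy coincides with the paper's: a chain of constant-size, degree-$3$, outerplanar gadgets, each encoding a binary choice, with positionally significant powers of two in $c_1$ so that $f_1$ is injective on the $2^k$ choice vectors, and with $f_2$ forced to be attained on a single long-range pair $(s,t)$ whose spanner-distance accumulates per-gadget increments that are again distinct powers of two. However, the step you yourself flag as "the technical core" --- arranging that every local pair-stretch stays below the global $(s,t)$-stretch while $d^E_{c_2}(s,t)$ stays small --- is exactly where the proposal has a genuine gap, and the one concrete mechanism you offer does not work. Inflating all internal $c_2$-weights by a common factor $M$ changes nothing: the stretch of any pair is a ratio $d^S_{c_2}/d^E_{c_2}$ and is invariant under uniform scaling, so local stretches do not "collapse towards $1$." The paper's resolution is different and is the one idea your sketch is missing: add a direct edge $\{s,t\}$ with $c_2$-length $1$ and a prohibitively large $c_1$-cost ($2^{n+1}$, exceeding the total cost of everything else). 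No Pareto-optimal spanner in the relevant family includes this edge, yet its presence in $E$ forces $d^E_{c_2}(s,t)=1$, so the $(s,t)$-stretch equals the full $s$-$t$ path length ($\geq 2^{n+1}+n-3$), while each gadget's local detour has stretch at most $2$. That single shortcut edge is what makes the global pair dominate, not any rescaling of weights.

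A second, smaller gap: pairwise incomparability of the $2^k$ value vectors does not "follow from the independent, injective encoding in both objectives" --- injectivity gives distinctness, not incomparability. You additionally need that the two objectives are anti-monotone in each local decision (taking the direct edge raises $f_1$ and lowers the $s$-$t$ distance) and a highest-differing-index argument exploiting $2^j > \sum_{i<j} 2^i$ to show that whenever $f_1(S) < f_1(S')$ one also has $f_2(S') < f_2(S)$; the paper carries this out explicitly, including the case $f_1(S)=f_1(S')$, where the unique binary representation forces the two spanners to make identical choices on all direct edges and hence $f_2(S)=f_2(S')$. As written, the proposal is a correct plan with the decisive gadget (the cheap-in-$c_2$, expensive-in-$c_1$ shortcut) and the incomparability argument still to be supplied.
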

This is an interesting result, as there are non-trivial stretch factors for which the Minimum t-Spanner problem is solvable in linear time, under such restrictions.
In \Cref{section:Non-Dominated Set}, we first consider the output-sensitive complexity of computing the non-dominated set for unweighted instances of the MSp, where each edge has a cost and length of $1$.
\begin{theorem}\label{theorem:unweighted_MSp_OP}
If \textbf{P} $\neq$ \textbf{NP}, then MSp $\notin$ \textbf{OP}, even for unweighted instances.
\end{theorem}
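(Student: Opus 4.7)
The plan is to reduce from the unweighted Minimum 2-Spanner problem, which is \textbf{NP}-hard by Cai and Keil \cite{CaiKeil1994}, exploiting the fact that for unweighted MSp instances the non-dominated set is small enough that an output-polynomial algorithm collapses into a genuinely polynomial-time one.

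The first step is to bound $|\mathcal{Y}_N|$ for unweighted instances. Since $c_1 \equiv 1$, the first objective $f_1(S) = |S|$ takes only integer values in $\{n-1, \dots, m\}$, as any feasible spanner must be connected and is bounded above by all of $E$. Moreover, any two distinct points of $\mathcal{Y}_N$ have distinct $f_1$-values, because equal $f_1$ values would force one to dominate the other in $f_2$. Hence $|\mathcal{Y}_N| \leq m - n + 2$, which is polynomial in the input size.

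The second step reads the optimum Minimum 2-Spanner value off $\mathcal{Y}_N$. For an input graph $G = (V,E)$ of Minimum 2-Spanner, build the MSp instance with $c_1 \equiv c_2 \equiv 1$. Then a subset $S \subseteq E$ is a 2-spanner of $G$ exactly when $f_2(S) \leq 2$, and its size is $f_1(S)$. Every feasible image with $f_2 \leq 2$ is (weakly) dominated by some $y \in \mathcal{Y}_N$ with $f_2(y) \leq 2$ and no larger $f_1$-value, so the minimum cardinality of a 2-spanner of $G$ equals $\min\{f_1(y) : y \in \mathcal{Y}_N,\ f_2(y) \leq 2\}$.

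Combining the two observations, an algorithm in \textbf{OP} for MSp would, on an unweighted instance, output $\mathcal{Y}_N$ in time polynomial in the input alone (because the output size is already polynomial), and a simple scan of $\mathcal{Y}_N$ would then decide Minimum 2-Spanner in polynomial time, contradicting $\textbf{P} \neq \textbf{NP}$. The main obstacle is largely imported: the heavy lifting is the classical \textbf{NP}-hardness of Minimum 2-Spanner, which we use as a black box. The genuinely multiobjective step is only the polynomial bound on $|\mathcal{Y}_N|$ in the unweighted setting, and this is precisely what makes the \textbf{OP} reduction go through.
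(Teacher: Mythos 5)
Your proposal is correct and follows essentially the same route as the paper: bound $|\mathcal{Y}_N|$ polynomially for unweighted instances (the paper's \Cref{observation:trivial_ewf_restricted_YN} via \Cref{lemma:poly_restricted_of}), conclude that an output-polynomial algorithm would run in polynomial time on such instances, and then read the answer to the \textbf{NP}-hard minimum $2$-spanner (t-Spanner\textsuperscript{DEC}) problem off the computed non-dominated set. Your version merely instantiates $t=2$ explicitly and fills in the details of the $|\mathcal{Y}_N|$ bound and the extraction of the optimum, which the paper leaves implicit.
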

Afterwards, we consider the \emph{BUCO} problem that can be interpreted as an unrestricted version of the Knapsack problem and discuss the output-sensitive complexity of computing the non-dominated set for degree bounded outerplanar instances of the MSp.
While BUCO appears to be a straight forward problem, it is currently unknown whether it can be solved in output-polynomial time \cite{BUCO_complexity2020}. However, it has been shown that there are other problems of unknown output-sensitive complexity that the BUCO problem can be reduced to. This motivated the introduction of the complexity-class of \textbf{BUCO}-hard problems \cite{boklerDIS}.
\begin{theorem}\label{theorem:MSp_BUCO_hard}
MSp is \textbf{BUCO}-hard even on degree-3 bounded outerplanar graphs.
\end{theorem}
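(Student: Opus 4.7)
The plan is to construct a polynomial-time reduction from BUCO to MSp on degree-$3$ bounded outerplanar graphs that makes the non-dominated set of the constructed MSp instance polynomial-time recoverable into the non-dominated set of the BUCO instance. Given a BUCO instance with items $(c_i,d_i)_{i=1}^n$, I would build the MSp instance $G$ by chaining $n$ item-gadgets in series between designated endpoints $s$ and $t$, reusing and adapting the gadget construction underlying Theorem~\ref{theorem:MSp_intractable}. Each gadget $G_i$, placed between port vertices $u_i$ and $v_i$ (with $v_i = u_{i+1}$, $u_1 = s$, $v_n = t$), is a small outerplanar theta-subgraph that realises a binary include/exclude choice for item $i$. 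One of its two parallel paths is always required in any feasible spanner (to keep its internal vertices connected) and is assigned $c_1$-cost zero, while the other path carries $c_1$-cost $c_i$ and corresponds to the item-choice. The $c_2$-weights within $G_i$ are calibrated so that excluding the shortcut raises certain internal $c_2$-distances by an amount reflecting $d_i$. Degree at most $3$ is maintained by inserting degree-$2$ bridging vertices at the ports, and outerplanarity follows from the theta-chain layout.

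With this construction, the first objective automatically satisfies $f_1(S) = \sum_{i \in T} c_i + \mathrm{const}$, where $T \subseteq [n]$ indexes the included shortcuts, directly carrying the first BUCO objective. The delicate step, and the principal obstacle, is relating the second objective $f_2(S)$ to the second BUCO objective $\sum_{i \in T} d_i$. Since $f_2$ is a maximum over pairwise stretch ratios, the mediant inequality forces the composite $(s,t)$-stretch to lie below the largest single-gadget stretch, so naive chaining yields $f_2$ equal to $1 + \max_{i \notin T} d_i / K$, a quantity that attains only polynomially many levels and therefore cannot by itself encode the potentially exponentially large BUCO non-dominated set. To bridge this gap, I would augment the gadgets with auxiliary ``witness'' structure designed so that the non-dominated MSp values on $G$ stand in a polynomial-time computable correspondence with the non-dominated BUCO value pairs. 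A promising concrete approach is to scale the $d_i$ (for instance as sufficiently separated values so that partial sums remain distinguishable) and to introduce additional edges whose stretch dominates the intra-gadget stretches, effectively forcing the maximum stretch to aggregate the detours of the excluded items rather than select a single one.

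Finally, I would verify the standard checklist: (i) the instance $G$ has polynomial size in the BUCO input; (ii) $G$ is degree-$3$ outerplanar by construction; (iii) the bijection from spanners in $G$ to subsets $T$ preserves Pareto dominance between the MSp and BUCO value pairs (with a sign flip on $d_i$ when translating between $\bar T$ and $T$ contributions); and (iv) the recovery of BUCO's non-dominated set from $\mathcal{Y}_N(G)$ runs in polynomial time per output pair. Together this yields an output-polynomial reduction from BUCO to MSp on degree-$3$ bounded outerplanar graphs, establishing the claimed BUCO-hardness. The crux is step (iii): overcoming the max-versus-sum mismatch between $f_2$ and BUCO's additive second objective via the careful gadget design and auxiliary witness edges described above.
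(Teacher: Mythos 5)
Your skeleton matches the paper's reduction --- a serial chain of small theta-gadgets between $s$ and $t$, $f_1$ carrying $\sum_i c_i^1$ up to an additive constant, degree-3 outerplanarity, and a final filter recovering BUCO's non-dominated set --- but the step you yourself flag as the crux is exactly the step you leave unresolved, and that is a genuine gap. The paper's fix is concrete and simple: in addition to the chain, add one edge $\{s,t\}$ between the endpoints with weight $(M,1)^\mathsf{T}$, where $M=\sum_i c_i^1+1$. Since $d^E_{c_2}(s,t)=1$, for any spanner $S$ omitting this edge the pair $(s,t)$ has stretch equal to the full $c_2$-length of the $s$-$t$ path in $S$, i.e.\ a \emph{sum} of per-gadget contributions, and this term dominates every intra-gadget stretch (via \Cref{lemma:check_t-spanner}). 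That single denominator-$1$ edge is what converts the max-of-ratios objective into BUCO's additive second objective: with gadget edges $\{v_i,w_i\}$ of weight $(0,c_i^2+2)^\mathsf{T}$, $\{v_i,v_i'\}$ of weight $(0,1)^\mathsf{T}$ and $\{v_i',w_i\}$ of weight $(c_i^1,1)^\mathsf{T}$, the spanner $S_x$ encoding $x\in\{0,1\}^n$ satisfies $f_1(S_x)=C^1-c^{1\mathsf{T}}x$ and $f_2(S_x)=c^{2\mathsf{T}}x+3n-1$, an exact affine shift of the BUCO value vector; no rescaling of the $c_i^2$ and no auxiliary \enquote{witness} structure is needed. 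Your phrase \enquote{additional edges whose stretch dominates the intra-gadget stretches} gestures in this direction, but without fixing the $c_2$-weight of that edge to $1$ (so the ratio's denominator disappears and detours aggregate additively) and its $c_1$-cost above the total gadget cost (so it never belongs to a spanner competing with the encoded solutions and is detectable in the output), the aggregation you need is asserted, not established.

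Two further points your checklist glosses over. First, there is no bijection between spanners and subsets $T$: spanners containing $\{s,t\}$ correspond to no BUCO solution, and one must show both that they contribute only polynomially many non-dominated points --- in the paper at most $n+1$, because for a Pareto-optimal such spanner $f_2$ equals $d^S_{c_2}(v_l',w_l)=c_l^2+3$ for the missing shortcut of largest $c_2$-weight, so \Cref{lemma:poly_restricted_of} applies --- and that they are recognizable in the output by $y^1\geq M$, so the filter stays output-polynomial. Second, one must prove that Pareto-optimal BUCO solutions map to Pareto-optimal spanners (otherwise their images need not appear in the MSp non-dominated set at all); this uses the affine correspondence above together with the observation that any potentially dominating spanner may be assumed to contain all $c_1$-free edges and, because of the cost $M$, to avoid $\{s,t\}$. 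As written, your proposal correctly locates the max-versus-sum difficulty but does not overcome it.
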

Moreover, this theorem implies that if there is a polynomial time algorithm for the minimum $t$-spanner problem on degree-3 bounded outerplaner graphs where $t>1$ is part of the input then BUCO can be solved in output polynomial time.

As \Cref{theorem:unweighted_MSp_OP} states that we cannot compute the entire non-dominated set of unweighted MSp instances in output-polynomial time, in \Cref{section:Extreme Points} we define the problem of computing the set of extreme points for instances of the MSp  (MSp\textsuperscript{YEx}) and show its hardness with regard to output-sensitive complexity. 
\begin{theorem}\label{theorem:MSP_YEx_notin_OP}
If \textbf{P} $\neq$ \textbf{NP}, then MSp\textsuperscript{YEx} $\notin$ \textbf{OP}, even for unweighted instances.
\end{theorem}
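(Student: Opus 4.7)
The plan is to reduce from the \emph{Tree $t$-Spanner} problem: given an unweighted graph $G$ and an integer $t \geq 4$, decide whether $G$ admits a spanning tree of stretch at most $t$. By the classical result of Cai and Corneil this decision problem is \textbf{NP}-complete on unweighted graphs. Given such an instance $(G,t)$, I would form the MSp instance $I$ on the same graph with $c_1 \equiv c_2 \equiv 1$. The crucial observation is that every connected spanner has at least $n-1$ edges, so $n-1$ is the minimum value of $f_1$, attained precisely by the spanning trees of $G$; hence the unique non-dominated point with $f_1 = n-1$ is $y^{*} = (n-1, \sigma^{*})$, where $\sigma^{*}$ denotes the minimum stretch over all spanning trees of $G$. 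Since $y^{*}$ is the lexicographic minimum of $\mathcal{Y}_N$ (first by $f_1$, then by $f_2$), it is the unique minimiser of $\lambda_1 f_1 + \lambda_2 f_2$ whenever $\lambda_1$ is chosen sufficiently large relative to $\lambda_2 > 0$, so $y^{*} \in \mathcal{Y}_X(I)$. Consequently, the answer to the Tree $t$-Spanner instance is ``yes'' iff the extreme point of $I$ with smallest $f_1$-coordinate has $f_2$-coordinate at most $t$.

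To complete the reduction I would bound $|\mathcal{Y}_X(I)|$. Because both edge-weights equal $1$, one has $f_1(S) \in \{n-1, \ldots, m\}$ and $f_2(S) \in \{1, \ldots, n-1\}$. Extreme points are vertices of the lower-left convex hull of the image set, and two such vertices cannot share an $f_1$-coordinate, so $|\mathcal{Y}_X(I)| \leq m - n + 2 = O(m)$, which is polynomial in $|I|$. Assuming MSp\textsuperscript{YEx} $\in$ \textbf{OP}, the output-polynomial algorithm therefore runs on $I$ in total time $\mathrm{poly}(|I|, |\mathcal{Y}_X(I)|) = \mathrm{poly}(|I|)$; scanning its output for the extreme point of smallest $f_1$-coordinate and comparing that coordinate with $t$ decides the Tree $t$-Spanner instance in polynomial time, forcing $\mathbf{P} = \mathbf{NP}$.

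The step I expect to require the most care is the claim that $y^{*}$ is genuinely an extreme point rather than merely a non-dominated point; the lexicographic/weighted-sum argument sketched above handles this, but it depends crucially on the uniqueness of the minimum $f_1$-value and on $f_1, f_2$ ranging over polynomial-sized integer sets — features intrinsic to the unweighted setting, which also drive the polynomial upper bound on $|\mathcal{Y}_X(I)|$. In the weighted case neither property need hold, which is consistent with the theorem being stated only for unweighted instances.
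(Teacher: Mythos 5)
Your proposal is correct, but it takes a genuinely different route than the paper. The paper argues indirectly via 3SAT: it reuses Cai's reduction of 3SAT to 2-Spanner\textsuperscript{DEC}, equips Cai's graph with the trivial edge-weight function, and shows by an explicit weighted-sum computation (with $\lambda_w=(2,15n+9m)^\mathsf{T}$, the full edge set $S_1=E$, and an auxiliary point $y_h=(14n+7m,3)^\mathsf{T}$ that dominates everything cheaper than $K$ edges) that the yes-witness value vector $(K,2)^\mathsf{T}$ is an extreme point exactly when the formula is satisfiable; deciding 3SAT then reduces to testing membership of this prescribed point in $\mathcal{Y}_X$. You instead reduce from the tree $t$-spanner problem of Cai and Corneil ($\mathbf{NP}$-complete for $t\geq 4$), observe that the spanners with $f_1=n-1$ are exactly the spanning trees, and use the generic fact that the lexicographic minimum $(n-1,\sigma^*)$ is always an extreme point because both objectives range over bounded sets in the unweighted case (for $f_2$ this follows from Lemma~\ref{lemma:check_t-spanner} with unit lengths, so $f_2\leq n-1$; a concrete choice such as $\lambda=(n,1)^\mathsf{T}$ makes your ``sufficiently large'' step explicit and gives uniqueness of the minimizer). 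Both arguments then invoke the same final mechanism as the paper: $\mathcal{Y}_X\subseteq\mathcal{Y}_N$ is polynomially bounded for unweighted instances, so a hypothetical output-polynomial algorithm runs in genuinely polynomial time and its output can be scanned. Your route is more self-contained and elementary --- it avoids re-deriving the edge and vertex counts of Cai's gadgets and shows that even the single lexicographically minimal extreme point encodes an $\mathbf{NP}$-hard quantity (the minimum max-stretch of a spanning tree) --- whereas the paper's route stays within the stretch-$2$/3SAT machinery it uses elsewhere and establishes hardness already for recognizing one specific stretch-$2$ point among the extreme points; it does require the more delicate, instance-specific verification that this point is extreme. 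Your closing remarks about why the argument is confined to the unweighted setting match the paper's reasoning.
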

Finally, \Cref{section:Conclusion} has concluding remarks. More details can be found in the appendix. Note that we also define a directed version of the MSp (diMSp), for which the same results are proven. The only exemption being \Cref{theorem:MSp_BUCO_hard}. The diMSp is \textbf{BUCO}-hard, even for degree-4 bounded outerplanar instances.

\section{Preliminaries}\label{section:Preliminaries}
We denote $\mathbb{N}=\{0,1,2,...\}$, $\mathbb{N}_+\coloneqq \mathbb{N} \setminus \{0\}$ and non-negative real numbers as $\mathbb{R}_{\geq}$.
For $n \in \mathbb{N}$, we denote the set $\{1,...,n\}$ as $[n]$.
For a graph $G=(V,E)$, an edge $\{u,v\} \subseteq E$ and an edge-weight function $c_i\colon E \rightarrow \mathbb{Z}$, $i\in [2]$ we abbreviate $c_i(\{u,v\})$ by $c_i(u,v)$. Furthermore, for a set of edges $S \subseteq E$, we denote $c_i(S)=\sum_{e \in S} c_i(e)$.
In order to simplify the input of the (di)MSp, we sometimes combine the edge-weight functions $c_1$ and $c_2$ into one function $c\colon E \rightarrow \mathbb{Z} \times \mathbb{N}_+$ with $c(e)=(c_1(e), c_2(e))^\mathsf{T}$. 
Similarly, for an instance of the (di)MSp and a feasible (directed) spanner $S$, we combine the two objective functions $f_1(S)$ and $f_2(S)$ into one single function $f(S)=(f_1(S),f_2(S))^\mathsf{T}$ that directly maps $S$ to its value vector.
We sometimes refer to $c\colon E \rightarrow \{ (1,1)^\mathsf{T} \}$ with $c(e)=(c_1(e), c_2(e))^\mathsf{T}=(1,1)^\mathsf{T}$ for all $e \in E$ as the \emph{trivial edge-weight function} and call instances of the MSp with these edge-weight functions \emph{unweighted}.

The degree of a vertex in an undirected graph is the number of vertices it is adjacent to.
The degree of a vertex in a directed graph is the number of its in- and out-going edges.
For $\delta \in \mathbb{N}$, we call any graph $G=(V,E)$ \emph{degree-$\delta$ bounded} if for all $v \in V$ their degree is less than or equal to $\delta$.
We call graphs \emph{outerplanar} if they have a drawing, in which every vertex lies on the boundary of the outer face.
We call undirected graphs \emph{connected} if they are non-empty and any two of their vertices are linked by a path.
A directed graph is called \emph{weakly connected} if replacing all of its arcs with undirected edges results in a connected (undirected) graph. See also \cite{diestel2017, bang2008}.
For an instance of a MOCO problem with an objective function $f$, we denote the set of all its value vectors as $\mathcal{Y}$. 
For unequal value vectors $y,y'\in \mathcal{Y}$, we say $y$ is dominated by $y'$ if $y'$ is component wise less than or equal to $y$.
Analogously, for feasible solutions $S, S'$ we say $S$ is dominated by $S'$ if $f(S)$ is dominated by $f(S')$.
If a value vector is not dominated by any value vector, the associated solution is called \emph{Pareto-optimal}.


For a weakly connected, directed graph  $G=(V,A)$, we call a subset of arcs $S\subseteq A$ a \emph{directed spanner} of $G$, if $G'=(V,S)$ is a subgraph such that, for every pair of vertices $u,v \in V$, if there is a directed u-v-path in $E$, there is one in $S$ as well.

Note, that the definition of a (directed) spanner does not require the resulting subgraph to be acyclic.
Analogously to the MSp, we define the \emph{Directed Multiobjective Spanner (diMSp)} problem. The only differences being that the input is a weakly connected, directed graph, solutions are now directed spanners and that in the second objective function, we only consider pairs of vertices that are connected in the initial graph. This guarantees the well-definedness of the objective function values.

\begin{lemma}\label{lemma:poly_restricted_of}
For a set of (di)MSp instances $\mathcal{I}$, if there is a polynomial $p\colon \mathbb{N}\rightarrow \mathbb{N}$, such that for every instance $I \in \mathcal{I}$ and its set of solutions $\mathcal{S}\colon |f_i(\mathcal{S})| \leq p(|I|)$ for $i=1$ or $i=2$, then $|\mathcal{Y}_N| \leq p(|I|)$.
\end{lemma}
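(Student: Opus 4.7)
The plan is to exploit the fact that the (di)MSp is a bi\emph{objective} problem: in any two-objective problem, the non-dominated set is ``antichain-like'' in a stronger sense than merely being pairwise incomparable — any two distinct points must actually differ in \emph{both} coordinates. Once this is established, the lemma follows by a one-line projection argument.

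First I would prove the following auxiliary observation: if $y,y' \in \mathcal{Y}_N$ are distinct and $y_i = y'_i$ for some $i \in [2]$, then a contradiction ensues. Indeed, writing $j$ for the other index, since $y \neq y'$ we must have $y_j \neq y'_j$, and then whichever of $y,y'$ has the smaller $j$-th coordinate component-wise dominates the other (using the definition of dominance given in \Cref{section:Preliminaries}, which requires only $y \neq y'$ and component-wise $\leq$). This contradicts $y,y' \in \mathcal{Y}_N$.

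Next, let $i \in [2]$ be the index for which the hypothesis $|f_i(\mathcal{S})| \leq p(|I|)$ holds, where $\mathcal{S}$ denotes the set of feasible (directed) spanners of $I$. Consider the projection
\[ \pi_i \colon \mathcal{Y}_N \to f_i(\mathcal{S}), \quad y \mapsto y_i. \]
By the observation above, $\pi_i$ is injective. Since every non-dominated value vector is attained by some feasible solution, its $i$-th coordinate lies in $f_i(\mathcal{S})$, so $\pi_i$ is well defined. Thus $|\mathcal{Y}_N| \leq |f_i(\mathcal{S})| \leq p(|I|)$, which is what we had to show.

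There is no real obstacle here; the only subtlety is being careful with the dominance definition (the paper's convention requires $y \neq y'$, which is exactly what makes the projection argument go through). The argument is oblivious to whether we are in the undirected MSp setting or the directed diMSp setting, so a single proof covers both cases.
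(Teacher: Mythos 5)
Your proof is correct and takes essentially the same route as the paper: the paper also argues that for each value $a$ in the polynomially bounded objective's image there can be at most one non-dominated value vector (phrased via the minimum-$f_2$ representative dominating all others with first coordinate $a$), which is exactly your injective-projection argument in different words.
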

\begin{proof}
Without loss of generality, we can assume that $f_1$ only has polynomially many different values in its image.
For every $a \in f_1(\mathcal{S})$, there is one $s' \in S$ with $f_1(s')=a$ and $f_2(s')\leq f_2(s)$ for all $s \in S$ with $f_1(s)=a$. Hence, $(a,f_2(s'))^\mathsf{T}$ dominates $(a,f_2(s))^\mathsf{T}$ for all $s \in S$ with $f_1(s)=a$. Thus, for each $a \in f_1(\mathcal{S})$ there is only one non-dominated value vector. \hfill \qed
\end{proof}
\begin{observation}\label{observation:spanner_adding_edges_with_c1=0}
It is clear that adding edges to a spanner never increases its stretch factor and that therefore, for every non-dominated value vector $y \in \mathcal{Y}_N$ there is a spanner $S$ with $f(S)=y$ and $e \in S$ for all $e \in E$ with $c_1(e) = 0$.
\end{observation}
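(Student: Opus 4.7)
The proof plan splits into the two claims made in the observation. For the first claim, I would take any spanner $S$ of $G$ and any $S'$ with $S \subseteq S' \subseteq E$. Since $S'$ contains $S$, every $u$-$v$-path in $S$ is also a $u$-$v$-path in $S'$, so $d^{S'}_{c_2}(u,v) \leq d^S_{c_2}(u,v)$ for all $u,v \in V$. The denominator $d^E_{c_2}(u,v)$ does not depend on the chosen subset, so taking the maximum over all vertex pairs gives $f_2(S') \leq f_2(S)$. Also $S' \supseteq S$ is still connected, hence a spanner, so the claim is well-defined.

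For the second claim, fix $y \in \mathcal{Y}_N$ and some spanner $S$ with $f(S) = y$. Let $Z = \{e \in E : c_1(e) = 0\}$ and set $S^* = S \cup Z$. Since $S \subseteq S^*$ and $S$ is connected, $S^*$ is still a spanner. Adding zero-cost edges changes $f_1$ by exactly $\sum_{e \in Z \setminus S} c_1(e) = 0$, so $f_1(S^*) = f_1(S)$. By the first claim, $f_2(S^*) \leq f_2(S)$. Therefore $f(S^*)$ is componentwise at most $y = f(S)$; since $y$ is non-dominated, we must have $f(S^*) = y$, and by construction $S^*$ contains every edge of cost $0$.

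I do not expect any real obstacle here — both parts are immediate consequences of the definitions of spanner, $f_1$, and $f_2$, and of the definition of a non-dominated value vector. The only thing that merits explicit mention is that enlarging $S$ preserves connectedness, so the resulting edge set is still a feasible spanner.
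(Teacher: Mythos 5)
Your argument is correct and simply spells out the monotonicity of $f_2$ under edge addition and the preservation of $f_1$ when adding zero-cost edges, which is exactly the (unstated) reasoning the paper relies on when it declares this observation clear. No gaps; the only point worth stating explicitly, preservation of connectedness and hence feasibility, you have already addressed.
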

We call the decision problem corresponding to the Minimum t-Spanner problem t-Spanner\textsuperscript{DEC}. For it, verifying the stretch factor $t$ of a spanner only requires considering pairs of vertices that are connected in the underlying graph \cite{peleg1989}. In case of the MSp an analogous statement can be made.

\begin{lemma}\label{lemma:check_t-spanner}
Let $I=(G=(V,E),c_1,c_2)$ be a MSp instance with a connected, undirected graph $G$ and edge-weight functions $c_1$ and $c_2$. 
For any spanner $S$ of $G$, $f_2(S) =\max_{u,v \in V} \frac{d^S_{c_2}(u,v)}{d^E_{c_2}(u,v)}=\max_{\{u,v\}\in E} \frac{d^S_{c_2}(u,v)}{d^E_{c_2}(u,v)}$ holds.
\end{lemma}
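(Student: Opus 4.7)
The plan is to prove the two inequalities separately. The direction $\max_{u,v\in V}\frac{d^S_{c_2}(u,v)}{d^E_{c_2}(u,v)} \ge \max_{\{u,v\}\in E}\frac{d^S_{c_2}(u,v)}{d^E_{c_2}(u,v)}$ is immediate, because each edge $\{u,v\}\in E$ corresponds to a pair of (distinct, connected) vertices in $V$, so the right-hand max is taken over a subset of the ratios considered on the left. The real content lies in the other direction, namely bounding the ratio $\frac{d^S_{c_2}(u,v)}{d^E_{c_2}(u,v)}$ for an arbitrary pair $u,v\in V$ by the maximum ratio over adjacent pairs.

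For that direction, let $t \coloneqq \max_{\{x,y\}\in E}\frac{d^S_{c_2}(x,y)}{d^E_{c_2}(x,y)}$, fix an arbitrary pair $u,v\in V$, and pick a shortest $u$-$v$-path in $E$ with respect to $c_2$, say $u=v_0,v_1,\dots,v_k=v$. Since $S$ is a spanner, $S$ contains some $v_{i-1}$-$v_i$-path for every $i\in[k]$, so by concatenation
\[
d^S_{c_2}(u,v) \;\le\; \sum_{i=1}^{k} d^S_{c_2}(v_{i-1},v_i) \;\le\; t\cdot \sum_{i=1}^{k} d^E_{c_2}(v_{i-1},v_i).
\]
If I can replace each $d^E_{c_2}(v_{i-1},v_i)$ in the right-most sum by $c_2(v_{i-1},v_i)$, then that sum collapses to $d^E_{c_2}(u,v)$ by choice of the path, yielding $\frac{d^S_{c_2}(u,v)}{d^E_{c_2}(u,v)}\le t$ as desired.

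The key step I expect to be the main (though minor) obstacle is justifying precisely that equality $d^E_{c_2}(v_{i-1},v_i)=c_2(v_{i-1},v_i)$ holds for every edge traversed along a shortest path. I would argue this by the standard sub-path-optimality property: if for some $i$ the true distance $d^E_{c_2}(v_{i-1},v_i)$ were strictly less than $c_2(v_{i-1},v_i)$, then splicing a shorter $v_{i-1}$-$v_i$-walk into $v_0,\dots,v_k$ would produce a $u$-$v$-walk of total $c_2$-length strictly less than $d^E_{c_2}(u,v)$, contradicting the choice of the path as a shortest one. Since each edge of $E$ is a valid $v_{i-1}$-$v_i$-walk of length $c_2(v_{i-1},v_i)$, the reverse inequality is trivial, so equality holds. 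Combining this with the chain of inequalities above completes the proof.
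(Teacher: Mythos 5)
Your proof is correct and follows essentially the same route as the paper: decompose along a shortest $u$-$v$-path in $E$, bound $d^S_{c_2}(u,v)$ by the sum of the $S$-distances of its edges, and compare edge-by-edge with the $E$-distances (the paper packages this comparison as the mediant-type inequality $\frac{\sum a_i}{\sum b_i}\le \max_i \frac{a_i}{b_i}$, while you factor out the maximum ratio $t$ and use subpath optimality). Your extra observation that $d^E_{c_2}(v_{i-1},v_i)=c_2(v_{i-1},v_i)$ on a shortest path is fine, though the inequality $d^E_{c_2}(v_{i-1},v_i)\le c_2(v_{i-1},v_i)$ already suffices.
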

\begin{proof}
Let $S$ be a spanner of $G$ and assume $f_2(S)=\frac{d^S_{c_2}(r,z)}{d^E_{c_2}(r,z)}$ with $\{r,z\} \notin E$. Let $\{r=u_0,u_1\}, \{u_1,u_2\},...,\{u_{m-1},u_m=z\}$ be the shortest r-z-path in $E$. Denote the set of pairs of vertices $(u_i, u_{i+1})$, $0\leq i \leq m-1$ as $U$.
We get
\begin{align*}
\frac{d^{S}_{c_2}(r,z)}{d^{E}_{c_2}(r,z)} & \leq \frac{\sum_{i=0}^{m-1} d^{S}_{c_2}(u_i, u_{i+1})}{ \sum_{i=0}^{m-1} d^{E}_{c_2}(u_i, u_{i+1})}
 \leq \max_{(u_i, u_{i+1}) \in U} \frac{d^{S}_{c_2}(u_i, u_{i+1})}{d^{E}_{c_2}(u_i, u_{i+1})} \cdot \frac{\sum_{i=0}^{m-1} d^{E}_{c_2}(u_i, u_{i+1})}{ \sum_{i=0}^{m-1} d^{E}_{c_2}(u_i, u_{i+1})}\\
& =  \max_{(u_i, u_{i+1}) \in U} \frac{d^{S}_{c_2}(u_i, u_{i+1})}{d^{E}_{c_2}(u_i, u_{i+1})}. \qquad\qquad\qquad\qquad\qquad\qquad\qquad\qquad \qquad ~\qed
\end{align*}
\end{proof}
It is clear that the same arguments hold for the directed case.

\section{Intractability}\label{section:Intractability}
We begin by proving that no algorithm is capable of solving the MSp in polynomial time, even if we restrict the considered graphs to be both degree-3 bounded and outerplanar. We do this by showing that there is a family of instances, complying to these restrictions, for which the size of the non-dominated set $\mathcal{Y}_N$ is exponential in the size of the instance, proving the intractability of the MSp.

Consider the following family of instances for $2\leq n \in \mathbb{N}$, of connected, undirected graphs $G=(V,E)$ and edge-weight functions $c_1\colon E \rightarrow \mathbb{Z}$ and $c_2\colon E \rightarrow \mathbb{N}_+$. For every $i \in [n]$, we create vertices $v_i, v_i'$ and $w_i$, and add edges $\{v_i, w_i\}$ with weights $(2^{i}, 2^{i})^\mathsf{T}$, as well as edges  $\{v_i, v_i'\}$ and $\{v_i', w_i\}$ with respective weights of $(0, 2^i)^\mathsf{T}$. Furthermore, we introduce edges $\{w_i,v_{i+1}\}$ with weights $(0,1)^\mathsf{T}$, for $i \in [n-1]$. Finally, we define $v_1\coloneqq s$ and $w_n\coloneqq t$ and add the edge $\{s,t\}$ with weights $(2^{n+1}, 1)^\mathsf{T}$.
An example of this construction can be seen in \Cref{figure:Intractability_deg3}.
Note, that every so constructed graph is degree-$3$ bounded and outerplanar.
\begin{center}
\begin{figure}[b]
\includegraphics[width=\textwidth]{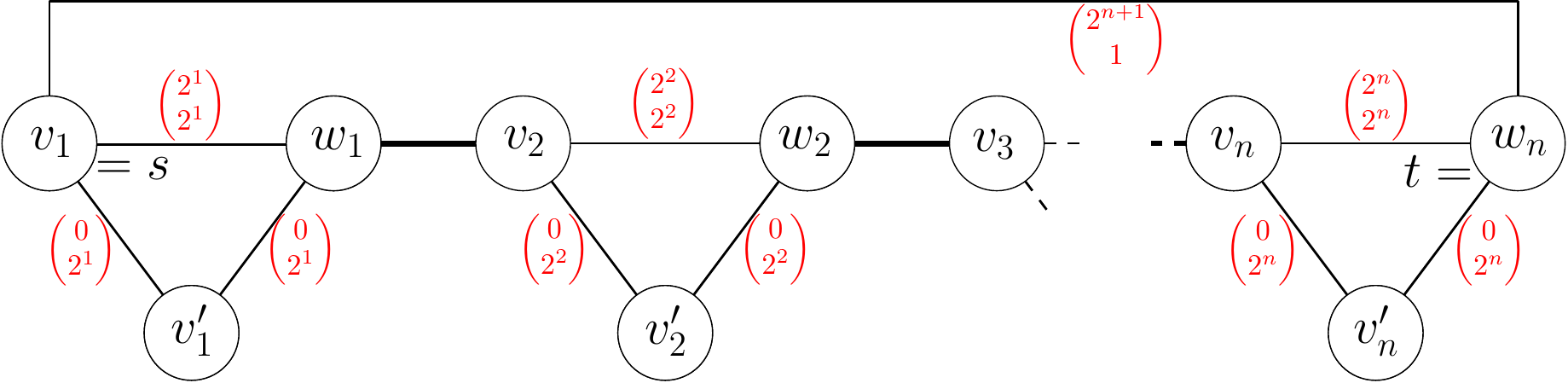}
\caption{The family of instances constructed for \Cref{theorem:MSp_intractable}. Thick edges hold weights $(0,1)^\mathsf{T}$.}
\label{figure:Intractability_deg3}
\end{figure}
\end{center}
Note that the graph $G$ contains at least $2^n$ spanners that do not contain the edge $\{s,t\}$.
With \Cref{observation:spanner_adding_edges_with_c1=0}, we know that for every non-dominated value vector $y$, there is a Pareto-optimal spanner $S$ of $G$ that contains every $e \in E$ with $c_1(e)=0$, with $f(S)=y$.
Define $X$ as the set of all feasible spanners $S$ of $G$, that contain every $e \in E$ with $c_1(e)=0$ and do not contain the edge $\{s,t\}$. 
We now simplify the second objective function $f_2(S)$, for every $ S \in X$, using \Cref{lemma:check_t-spanner}.

\begin{lemma}\label{lemma:intract_f_2}
For all spanners $S \in X$,
$f_2(S)=\max_{u,v \in V} \frac{d^{S}_{c_2}(u,v)}{d^{E}_{c_2}(u,v)}=\frac{d^{S}_{c_2}(s,t)}{d^{E}_{c_2}(s,t)}$.
\end{lemma}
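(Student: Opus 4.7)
The plan is to apply Lemma~\ref{lemma:check_t-spanner} to restrict the maximum in the definition of $f_2(S)$ to pairs $\{u,v\}\in E$, and then to compute or bound the stretch $d^S_{c_2}(u,v)/d^E_{c_2}(u,v)$ separately for each edge of the construction. There are essentially four types to consider: the weight-$(0,2^i)^\mathsf{T}$ edges $\{v_i,v_i'\}$ and $\{v_i',w_i\}$, the weight-$(0,1)^\mathsf{T}$ connector edges $\{w_i,v_{i+1}\}$, the weight-$(2^i,2^i)^\mathsf{T}$ edges $\{v_i,w_i\}$, and the edge $\{s,t\}$ of weight $(2^{n+1},1)^\mathsf{T}$.

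For any edge $\{u,v\}$ with $c_1=0$, the definition of $X$ ensures $\{u,v\}\in S$, so $d^S_{c_2}(u,v)\le c_2(u,v)$. To conclude that the stretch is exactly $1$, I would argue that the direct edge is itself a shortest $c_2$-path in $E$: inside the $i$-th gadget, the only alternative $v_i$-$v_i'$ or $v_i'$-$w_i$ path has $c_2$-length $2^{i+1}>2^i$, while every detour leaving the gadget costs at least one further connector plus a full gadget traversal; and for a connector edge any alternative path already uses at least two edges and thus has $c_2$-length $\geq 2$. For the edges $\{v_i,w_i\}$, the direct edge yields $d^E_{c_2}(v_i,w_i)=2^i$, while the two-edge detour through $v_i'$ lies entirely in $S$, so $d^S_{c_2}(v_i,w_i)\le 2^{i+1}$, giving a stretch of at most $2$.

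For the remaining edge $\{s,t\}$, the direct edge gives $d^E_{c_2}(s,t)=1$, and since $\{s,t\}\notin S$ every $s$-$t$ path in $S$ must traverse all $n$ gadgets and the $n-1$ connector edges, so $d^S_{c_2}(s,t)\ge \sum_{i=1}^n 2^i + (n-1) = 2^{n+1}+n-3$. Already for $n\ge 2$ this strictly exceeds the upper bound of $2$ obtained for every other edge pair, so the maximum of Lemma~\ref{lemma:check_t-spanner} is attained precisely at the pair $\{s,t\}$, which proves the claim.

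The only slightly delicate step is the argument that the direct edge is always the shortest path in $E$ for the zero-cost edges; in particular one must rule out detours via the edge $\{s,t\}$, but any such detour uses at least two additional edges of positive $c_2$-weight and therefore has $c_2$-length at least $2$, which strictly exceeds the contribution of the direct edge in every case.
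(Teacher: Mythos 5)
Correct, and essentially the paper's own argument: restrict the maximum to edge pairs via Lemma~\ref{lemma:check_t-spanner}, lower-bound the $s$-$t$ stretch by $2^{n+1}+n-3$ (possible since $\{s,t\}\notin S$), and bound the stretch of every other candidate pair by a small constant, which the paper does only for the missing edges $\{v_i,w_i\}$ after discarding pairs whose edge lies in $S$. One small inaccuracy in your closing remark: a detour of $c_2$-length at least $2$ does not ``strictly exceed the contribution of the direct edge'' for the weight-$2^i$ edges $\{v_i,v_i'\}$ and $\{v_i',w_i\}$ when $i\geq 1$; this is harmless, though, because any detour between those endpoints avoiding the direct edge must in fact traverse the whole chain and $\{s,t\}$ at cost at least $2^{n+1}+n-3$, and since these edges belong to every $S\in X$ even a crude stretch bound for them stays below the $s$-$t$ stretch, so your conclusion stands.
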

\begin{proof}
Let $S \in X$ be a spanner. With \Cref{lemma:check_t-spanner}, we know that we only have to consider pairs of vertices $u,v \in V$ with $\{u,v\}  \in E$ and $\{u,v\} \notin S$ in order to determine $f_2(S)$. We know that $\{s,t\} \notin S$ holds. Thus,
\[ \frac{d^{S}_{c_2}(s,t)}{d^{E}_{c_2}(s,t)} \geq 
\frac{ \Bigl( \sum_{i=1}^{n}c_2(v_i,w_i)\Bigr) + \left( \sum_{i=1}^{n-1}c_2(w_i,v_{i+1}) \right)}{1}= 2^{n+1}+n-3.\]
The only other pairs of vertices $u,v \in V$ with $\{u,v\} \in E$ for which $\{u,v\} \notin S$ might hold are $v_i,w_i$, for every $i \in [n]$.
We get 
\[\frac{d^{S}_{c_2}(v_i,w_i)}{d^{E}_{c_2}(v_i,w_i)} \leq \frac{c_2(v_i,v_i')+c_2(v_i',w_i)}{c_2(v_i,w_i)} = \frac{2 \cdot 2^i}{2^i}=2 < 2^{n+1}+n-3\leq \frac{d^{S}_{c_2}(s,t)}{d^{E}_{c_2}(s,t)}.\qquad \qed\]
\end{proof}

We now conduct a proof by contradiction to show that two different spanners $S,S' \in X$ do not dominate each other and do not have the same value vector. An expanded proof can be found in \Cref{apendix:intractability}.
\begin{lemma}
For all $S, S' \in  X$ with $S \neq S'$, $S$ and $S'$ do not dominate each other and have different value vectors.
\end{lemma}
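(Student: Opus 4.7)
The plan is to parameterize each spanner $S \in X$ by the subset $T_S = \{i \in [n] : \{v_i,w_i\}\in S\}$. Since by definition every $S \in X$ contains all edges of $c_1$-weight $0$ and does not contain $\{s,t\}$, the only degrees of freedom are these $n$ "diagonal" edges, so the map $S \mapsto T_S$ is a bijection between $X$ and the subsets of $[n]$. The first objective is immediate from this parameterization:
\[ f_1(S) = \sum_{i \in T_S} 2^i, \]
and because binary expansions are unique, different spanners in $X$ yield different values of $f_1$.

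Next I would compute $f_2(S)$ in closed form. By \Cref{lemma:intract_f_2} we have $f_2(S) = d^S_{c_2}(s,t)/d^E_{c_2}(s,t)$, and $d^E_{c_2}(s,t) = 1$ via the direct edge $\{s,t\}$. The shortest $s$-$t$ path in $S$ must traverse each gadget $i$ either via $\{v_i,w_i\}$ (if $i \in T_S$, contributing $2^i$) or via the detour $v_i{-}v_i'{-}w_i$ (if $i \notin T_S$, contributing $2^{i+1}$), together with the $n-1$ bridge edges of $c_2$-weight $1$. Hence
\[ f_2(S) = \sum_{i \in T_S} 2^i + \sum_{i \notin T_S} 2^{i+1} + (n-1). \]

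The key observation is that $f_1(S) + f_2(S)$ is independent of $S$: summing the two expressions gives
\[ f_1(S)+f_2(S) = 2\sum_{i=1}^n 2^i + (n-1) = 2^{n+2} + n - 5, \]
a constant depending only on the instance. I expect the main (minor) obstacle to be nothing more than keeping these arithmetic bookkeeping steps correct, in particular verifying that the detour through $v_i'$ is indeed the only shortest alternative when $\{v_i,w_i\} \notin S$, which follows because all other $c_1 = 0$ edges are in $S$ and no longer path can be shorter.

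Finally, given $S \neq S'$ in $X$, we have $T_S \neq T_{S'}$ and therefore $f_1(S) \neq f_1(S')$. The constancy of $f_1+f_2$ then forces $f_2(S) \neq f_2(S')$ with the opposite strict inequality. Hence $f(S) \neq f(S')$, and whichever of $S,S'$ has the smaller $f_1$ value has the strictly larger $f_2$ value, so neither value vector dominates the other. This completes the argument.
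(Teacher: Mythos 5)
Your proof is correct, and it packages the argument differently from the paper. The paper argues by contradiction with a case distinction: if $f_1(S)<f_1(S')$ it looks at the largest index $j$ where the shortest $s$-$t$-paths differ (the geometric $c_1$-weights force $\{v_j,w_j\}\notin S$, $\{v_j,w_j\}\in S'$ there) and then bounds $f_2(S')<f_2(S)$ by an exchange-style inequality; if $f_1(S)=f_1(S')$ it uses uniqueness of the $c_1$-costs to conclude the diagonal edge sets, and hence the value vectors, coincide. You instead parameterize $X$ by the subsets $T_S\subseteq[n]$ and compute both objectives in closed form via \Cref{lemma:intract_f_2}, which yields the invariant $f_1(S)+f_2(S)=2^{n+2}+n-5$ for all $S\in X$; together with the injectivity of $f_1$ (uniqueness of binary expansions) this gives pairwise distinct, pairwise non-dominating value vectors in one stroke, with no case analysis. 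Your arithmetic checks out (the detour $v_i\text{--}v_i'\text{--}w_i$ costs exactly $2^{i+1}$ versus $2^i$ for the direct edge, and the $n-1$ bridges contribute a constant), and the invariant-sum viewpoint buys a slightly stronger picture — the $2^n$ image points all lie on a common line of slope $-1$ — while the paper's exchange argument is more local and generalizes more readily to settings where exact closed forms are unavailable.
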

\begin{proof}
Let $S, S' \in  X$ be two different spanners and assume $S$ dominates $S'$. Therefore, either $f_1(S)<f_1(S')$ or $f_1(S)=f_1(S')$ holds. We begin by considering the first case. Let $j \in [n]$ be the greatest index at which the shortest s-t-paths in $S$ and $S'$ differ. Since $f_1(S)<f_1(S')$ holds, $S$ must not contain the edge $\{v_j, w_j\}$ while $S'$ has to contain it. Let $P$ be the remaining path that is identical for $S$ and $S'$. Thus, with \Cref{lemma:intract_f_2}, 
\begin{align*}
	f_2(S')&
	 \leq \left( \sum_{i=1}^{j-1} c_2(v_i, v_i')+c_2(v_i', w_i)+c_2(w_i,v_{i+1}) \right) +c_2(v_j, w_j)+c_2(P)\\
	& = \left( \sum_{i=1}^{j-1} 2 \cdot 2^i +1 \right)+2^j +c_2(P) < \left( \sum_{i=1}^{j-1}  2^i +1 \right)+2^j+2^j+c_2(P)\\
	& = \left(\sum_{i=1}^{j-1} c_2(v_i, w_i)+c_2(w_i,v_{i+1}) \right)+c_2(v_j, v_j')+c_2(v_j', w_j)+c_2(P)\\
	& =f_2(S).
\end{align*}
This contradicts the assumed domination.

Let us now consider the second case, in which $f_1(S)=f_1(S')$ holds. Then, in order for $S$ to dominate $S'$, $f_2(S)<f_2(S')$ must hold as well. By design of the $c_1$-edge-weights, we know that in order for $f_1(S)=f_1(S')$ to hold, it is true for every edge $\{v_i, w_i\} \in E$ that $\{v_i, w_i\} \in S \Leftrightarrow \{v_i, w_i\} \in S'$. This claim can be verified by considering that every edge $\{v_i, w_i\}$ has a unique $c_1$-cost that cannot be reproduced by any combination of edges $ e \in E \setminus \{v_i, w_i\}$. Consequently, the shortest s-t-paths in $S$ and $S'$ are exactly the same and therefore $f_2(S)=f_2(S')$ holds, which contradicts the assumed domination. \hfill \qed
\end{proof}

Finally, consider that no $S \in X$ can be dominated by any $\widehat{S} \notin X$.
This is clearly the case, due to the high $c_1$-cost of the edge $\{s,t\}$.
Concluding, we have shown that the set $X$ contains $2^n$ spanners, all of which have different value vectors that are not dominated. Thus, $|\mathcal{Y}_N|\geq |X|=2^n$ and, consequently, \Cref{theorem:MSp_intractable} hold.

Note that an analogous proof can be conducted for the diMSp. The only difference to the undirected case lies in the construction of the family of instances. We turn the family of MSp instances into a family of diMSp instances. For every $i \in [n]$, we replace edges $\{v_i, w_i\}$ with arcs $(v_i, w_i)$, edges  $\{v_i, v_i'\}$ with arcs $(v_i, v_i')$, edges $\{v_i', w_i\}$ with arcs $(v_i', w_i)$. Additionally, we replace edges $\{w_i,v_{i+1}\}$ with arcs $(w_i,v_{i+1})$, for $i \in [n-1]$. Finally, we replace $\{s,t\}$ with $(s,t)$. Every arc holds the same edge-weights as the undirected edge it replaced.

\section{Non-Dominated Set}\label{section:Non-Dominated Set}
In this section, we first consider the output-sensitive complexity of computing the non-dominated set of unweighted (di)MSp instances. Afterwards, we prove that if the (di)MSp can be solved by an output-polynomial algorithm, one can also solve the BUCO problem in output-polynomial time. Therefore, proving that the (di)MSp is \emph{\textbf{BUCO}-hard}.

\begin{observation}\label{observation:trivial_ewf_restricted_YN}
As the trivial edge-weight function only allows linear many values in the range of either of the two objective functions, with \Cref{lemma:poly_restricted_of}, we can infer that the non-dominated set of unweighted MSp instances is only polynomially sized.
\end{observation}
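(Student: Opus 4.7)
The plan is to apply \Cref{lemma:poly_restricted_of} to the first objective function. First I would observe that under the trivial edge-weight function we have $c_1(e)=1$ for every edge $e\in E$, so for any feasible spanner $S$ the value $f_1(S)=\sum_{e\in S}c_1(e)=|S|$ is simply the cardinality of $S$. Consequently the image $f_1(\mathcal{S})$ is a subset of $\{1,2,\ldots,|E|\}$, so $|f_1(\mathcal{S})|\leq |E|$, which is bounded by a polynomial $p(|I|)$ in the input size.

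With this bound in hand, a direct application of \Cref{lemma:poly_restricted_of} with $i=1$ yields $|\mathcal{Y}_N|\leq p(|I|)$, which is exactly the claimed polynomial bound on the non-dominated set. The argument carries over verbatim to the unweighted diMSp, since replacing edges by arcs does not alter the fact that $f_1(S)=|S|$ whenever $c_1\equiv 1$.

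No real obstacle is anticipated: the observation is a one-line consequence of \Cref{lemma:poly_restricted_of} once one notices that under unit costs $f_1$ collapses to the spanner cardinality. As a sanity check, one could alternatively bound $|f_2(\mathcal{S})|$, since under unit $c_2$-lengths both $d^S_{c_2}(u,v)$ and $d^E_{c_2}(u,v)$ are integers between $1$ and $|V|-1$, so $f_2$ takes at most $(|V|-1)^2$ distinct values; but the cardinality argument via $f_1$ is cleaner and gives a tighter polynomial.
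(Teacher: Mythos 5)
Your proposal is correct and follows essentially the same route as the paper: the observation is justified there exactly by noting that under the trivial edge-weight function the objective functions take only linearly many values (for $f_1$, the spanner cardinality, at most $|E|$ values) and then invoking \Cref{lemma:poly_restricted_of}. Your side remark on $f_2$ is also fine, though with \Cref{lemma:check_t-spanner} one even gets at most $|V|-1$ values since the denominators are $1$ on edges.
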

This observation directly infers that the non-dominated set of the (di)MSp cannot be computed in output-polynomial time, as this would enable us to solve the (directed) t-Spanner\textsuperscript{DEC} in polynomial time. Thus, \Cref{theorem:unweighted_MSp_OP} holds.

We now study the output-sensitive complexity of degree bounded outerplanar instances.
For a set of vectors $M$, $\min M$ refers to its non-dominated subset. 
\begin{definition}[Biobjective Unconstrained Optimization (BUCO) Problem \cite{boklerDIS}]
The input are vectors $c^1, c^2 \in \mathbb{N}^n$. A feasible solution is an element of $\{0,1\}^n$. The goal is to find the set of the non-dominated vectors
\[  \mathcal{Y}_N = \min \left\{ \left(  \begin{array}{c} -c^{1^\mathsf{T}} \\ c^{2^\mathsf{T}} \end{array}   \right) x ~ \middle| ~ x \in \{0,1\}^n   \right\}. \]
\end{definition}
The BUCO problem can be interpreted as an unrestricted Knapsack problem.
Without loss of generality, we can assume $c_i^1>0$ for every $i \in [n]$, since any item that does not contribute value is never part of a viable solution. Similarly, we can assume $c_i^2>0$ for every $i \in [n]$.

We prove that the MSp is \textbf{BUCO}-hard by showing that if there is an output-polynomial algorithm $\mathcal{A}$ for the MSp, we could use it to solve the BUCO problem in output-polynomial time. We start by constructing an algorithm that transforms any BUCO instance $I$ into a valid MSp instance $I'$ in polynomial time. Subsequently, we show that the set of non-dominated value vectors of the constructed MSp instance $I'$, that can be found using the algorithm $\mathcal{A}$, can be transformed into the set of non-dominated value vectors of the BUCO instance $I$, using an output-polynomial filter-algorithm.

Let $I$ be an instance of the BUCO problem given by $c^1, c^2 \in \mathbb{N}^n_+$. We construct an instance $I'$ of the MSp with a connected, undirected graph $G$, and edge-weight functions $c_1\colon E \rightarrow \mathbb{Z}$ and $c_2\colon E \rightarrow \mathbb{N}_+$.
Define the constants $C^1 \coloneqq \sum_{i=1}^{n}c_i^1$ and $M \coloneqq C^1 +1$ and construct the graph $G=(V,E)$ in the following way: Create vertices $v_i$, $w_i$ and $v_i'$ for $i \in [n]$ and connect them with edges $\{v_i,w_i\}$ with weights $(0, c_i^2 +2)^\mathsf{T}$, edges $\{v_i,v_i'\}$ with weights $(0,1)^\mathsf{T}$ and edges $\{v_i',w_i\}$ with weights $(c_i^1, 1)^\mathsf{T}$. Furthermore, we add edges $\{w_i,v_{i+1}\}$ with weights $(0,1)^\mathsf{T}$ for $i \in [n-1]$. Define $v_1\coloneqq s$ and $w_n\coloneqq t$. Finally, we add the edge $\{s,t\}$ with weights $(M, 1)^\mathsf{T}$. An example of this construction can be seen in \Cref{figure:OS_BUCO_hard}.

\begin{center}
\begin{figure}[bt]
\includegraphics[width=1\textwidth]{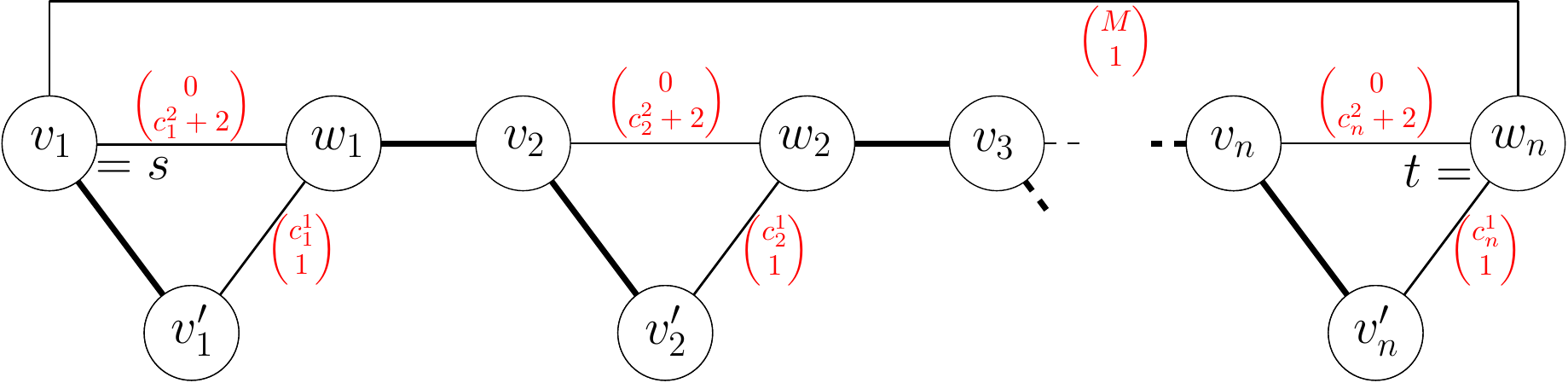}
\caption{Showing the reduction for \Cref{theorem:MSp_BUCO_hard}. Thick edges hold weights $(0,1)^\mathsf{T}$.}
\label{figure:OS_BUCO_hard}
\end{figure}
\end{center}

We observe that every so constructed instance is a valid MSp instance and that all steps can be performed in polynomial time in the size of the instance $I$.
Clearly, the constructed graph $G$ is degree-3 bounded and outerplanar. In order to show that the reduction is correct, we now have to prove that the non-dominated set of the constructed MSp instance can be transformed into the non-dominated set of the initial BUCO instance in output-polynomial time with regard to the instance $I$.

Let $y$ be a value vector of a BUCO instance given by $c^1, c^2 \in \mathbb{N}^n_+$ and let $x \in \{0,1\}^n$ be a solution that is mapped to $y$ with $y= ( -c^{1^\mathsf{T}}x,c^{2^\mathsf{T}}x )^\mathsf{T}$.
There is a spanner $S_x$ of $G$ with the following properties: For every $i \in [n]$, $S_x$ contains edges $\{v_i,w_i\}$ and $\{v_i,v_i'\}$, as well as edges $\{w_i,v_{i+1}\}$, for all $i \in [n-1]$. In addition, if $x_i=0$: $S_x$ contains the edge $\{v_i',w_i\}$. We observe that for every $x \in \{0,1\}^n$ the resulting set of edges $S_x$ is a feasible spanner of $G$ and that none of these spanners contains the edge $\{s,t\}$. Denote the set of all the spanners generated this way as $X$.

With \Cref{observation:spanner_adding_edges_with_c1=0}, we know that for every non-dominated value vector $y$, there is a Pareto-optimal spanner $S$ of $G$ that contains every $e \in E$ with $c_1(e)=0$, with $f(S)=y$.
Thus, clearly analogously to \Cref{lemma:intract_f_2}, $f_2(S)=\max_{u,v \in V} \frac{d^{S}_{c_2}(u,v)}{d^{E}_{c_2}(u,v)}=\frac{d^{S}_{c_2}(s,t)}{d^{E}_{c_2}(s,t)}$ holds for every spanner $S \in X$. We denote the set of edges $\{w_i,v_{i+1}\}$ for $i \in [n-1]$ as $W$.

We now examine how the value vector $y$ of a BUCO solution $x \in \{0,1\}^n$ is connected to the value vector $f(S_x)$ of the corresponding spanner $S_x$.

\begin{lemma}\label{lemma:BUCO-Spanner-value}
For any value vector $y=(-c^{1^\mathsf{T}}x, c^{2^\mathsf{T}}x)^\mathsf{T}$ of a BUCO instance and its associated solution $x \in \{0,1\}^n$, for the constructed corresponding spanner $S_x \in X$,
$f_1(S_x)= C^1 + y^1$ and $f_2(S_x)=y^2 +3n-1$ hold.
\end{lemma}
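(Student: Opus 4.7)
The plan is to compute the two components of $f(S_x)$ separately, using the structure of $S_x$ and the assumption that $c_i^2 \geq 1$ for all $i$.

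First I would handle $f_1(S_x)$, which is purely a bookkeeping step. Inspecting the construction, the only edges of $G$ carrying nonzero $c_1$-weight are the edges $\{v_i', w_i\}$ (with cost $c_i^1$) and the single edge $\{s,t\}$ (with cost $M$). By definition of $S_x$, the edge $\{s,t\}$ is excluded and $\{v_i', w_i\} \in S_x$ iff $x_i = 0$. Hence
\[
f_1(S_x) \;=\; \sum_{i\,:\,x_i = 0} c_i^1 \;=\; \sum_{i=1}^n c_i^1 \;-\; \sum_{i=1}^n c_i^1 x_i \;=\; C^1 - c^{1^{\mathsf T}}x \;=\; C^1 + y^1.
\]

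Next I would compute $f_2(S_x)$. Since $S_x \in X$ (in particular, $S_x$ contains every edge of $c_1$-weight $0$ and excludes $\{s,t\}$), the argument used in \Cref{lemma:intract_f_2} transfers verbatim to the current construction, giving $f_2(S_x) = d^{S_x}_{c_2}(s,t)/d^E_{c_2}(s,t)$. The denominator is trivially $d^E_{c_2}(s,t) = 1$, realised by the direct edge $\{s,t\}$ of $c_2$-weight $1$. For the numerator I would decompose every $s$-$t$ path in $S_x$ into the $n$ blocks from $v_i$ to $w_i$ together with the $n-1$ connector edges $\{w_i, v_{i+1}\}$, each contributing $c_2$-weight $1$. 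Inside the $i$-th block, $S_x$ always contains the direct edge $\{v_i, w_i\}$ of $c_2$-weight $c_i^2+2$; it additionally contains the two-edge path $v_i \to v_i' \to w_i$ of total $c_2$-weight $2$ exactly when $x_i = 0$. Using $c_i^2 \geq 1$, the indirect path is strictly shorter whenever it is present, so the block contributes $\min(c_i^2+2,\,2)=2$ if $x_i=0$ and $c_i^2 + 2$ if $x_i=1$.

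Summing the contributions, I obtain
\[
d^{S_x}_{c_2}(s,t) \;=\; \sum_{i\,:\,x_i=0} 2 \;+\; \sum_{i\,:\,x_i=1}(c_i^2 + 2) \;+\; (n-1) \;=\; 2n + c^{2^{\mathsf T}}x + (n-1) \;=\; y^2 + 3n - 1,
\]
which yields $f_2(S_x) = y^2 + 3n - 1$ and completes the claim. The only real subtlety, and the one I would write out carefully, is the verification that the shortest $s$-$t$ path in $S_x$ actually decomposes block-by-block in this way; this follows because $S_x$ has the explicit "ladder" topology described above and does not include $\{s,t\}$, so any $s$-$t$ path must traverse every connector $\{w_i, v_{i+1}\}$ and pick one of the two available $v_i$-$w_i$ routes inside each block independently. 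The rest is elementary arithmetic, guaranteed by the WLOG assumption $c_i^1, c_i^2 > 0$ that was made right after the definition of BUCO.
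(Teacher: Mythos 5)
Your proposal is correct and follows essentially the same route as the paper: $f_1(S_x)$ by summing the $c_1$-weights of the edges $\{v_i',w_i\}$ selected when $x_i=0$, and $f_2(S_x)$ via the already-established identity $f_2(S_x)=d^{S_x}_{c_2}(s,t)/d^E_{c_2}(s,t)$ with $d^E_{c_2}(s,t)=1$, followed by the block-plus-connector decomposition of the shortest $s$-$t$ path. You merely make explicit the $\min(c_i^2+2,2)=2$ comparison and the path-decomposition justification that the paper leaves implicit.
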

\begin{proof}
Let $y=(-c^{1^\mathsf{T}}x, c^{2^\mathsf{T}}x)^\mathsf{T}$ be a value vector of a BUCO instance and let $x \in \{0,1\}^n$ be its associated solution. Let $S_x \in X$ be the spanner in the constructed MSp instance that is based on $x$. Consider the two objective functions. 
\begin{align*}
f_1(S_x) &= \left(\sum_{i=1}^{n} c_1(v_i',w_i)\cdot (1-x_i) \right)
= \left(\sum_{i=1}^{n} c_i^1 \cdot (1-x_i)\right)
= C^1 + \sum_{i=1}^{n} -c_i^1  x_i \\
&=C^1 +y^1
\end{align*}
\begin{align*}
f_2(S_x)& = \left( \sum_{i=1}^{n} c_2(v_i,w_i) \cdot x_i+ (c_2(v_i,v_i')+c_2(v_i',w_i))(1-x_i) \right) +c_2(W)\\
& = \left( \sum_{i=1}^{n}(c_i^2+2) x_i  + (1+1) (1-x_i) \right)+(n-1) \cdot 1\\
& = \left( \sum_{i=1}^{n} c_i^2 x_i  +2 \right)+n-1 = y^2 +3n-1 \qquad \qquad \qquad \qquad \qquad \qquad  ~~~ \qed
\end{align*}
\end{proof}

Let us now consider the relationship between spanners $S_x \in X$ and spanners $S$ with $\{s,t\} \in S$.
\begin{lemma}\label{lemma:BUCO_ndom_MSp_ndom}
If $x \in \{0,1\}^n$ is a Pareto-optimal solution for the BUCO instance, its associated spanner $S_x$ is a Pareto-optimal solution for the constructed MSp instance.
\end{lemma}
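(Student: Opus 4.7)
The plan is to prove the contrapositive: if $S_x$ is dominated in the constructed MSp instance, then the associated $x$ is dominated in the BUCO instance. So suppose some spanner $S'$ dominates $S_x$.

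The first step is to reduce to the case $S' \in X$. For this I would perform two normalizations. (i) Let $E_0 \subseteq E$ denote the set of edges of $c_1$-cost zero, namely all $\{v_i,w_i\}$, $\{v_i,v_i'\}$, and $\{w_i,v_{i+1}\}$. Replacing $S'$ by $S' \cup E_0$ does not change $f_1$ and, by \Cref{observation:spanner_adding_edges_with_c1=0}, cannot increase $f_2$, so WLOG $E_0 \subseteq S'$. (ii) The edge $\{s,t\}$ cannot lie in $S'$: by \Cref{lemma:BUCO-Spanner-value} we have $f_1(S_x) = C^1 + y^1 = C^1 - c^{1\mathsf{T}} x \leq C^1 < M$, whereas $\{s,t\} \in S'$ would force $f_1(S') \geq M$, contradicting $f_1(S') \leq f_1(S_x)$. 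With $E_0 \subseteq S'$ and $\{s,t\} \notin S'$, the only freedom left in $S'$ is its intersection with $\{\{v_i',w_i\} : i \in [n]\}$. Defining $x'_i = 0$ iff $\{v_i',w_i\} \in S'$ and $x'_i = 1$ otherwise, we obtain $S' = S_{x'}$ for some $x' \in \{0,1\}^n$.

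The second step is to translate the domination back to BUCO. Applying \Cref{lemma:BUCO-Spanner-value} to both $S_x$ and $S_{x'}$, the coordinate-wise inequality $f(S_{x'}) \leq f(S_x)$ becomes
\[
C^1 - c^{1\mathsf{T}} x' \leq C^1 - c^{1\mathsf{T}} x \quad \text{and} \quad c^{2\mathsf{T}} x' + 3n-1 \leq c^{2\mathsf{T}} x + 3n-1,
\]
i.e., $-c^{1\mathsf{T}} x' \leq -c^{1\mathsf{T}} x$ and $c^{2\mathsf{T}} x' \leq c^{2\mathsf{T}} x$. Since the passage from $S'$ to $S_{x'}$ preserves $f_1$ and can only decrease $f_2$, the strict inequality that was present in at least one coordinate for $S'$ survives for $S_{x'}$. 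Hence the BUCO value vector of $x'$ dominates that of $x$, contradicting Pareto-optimality of $x$.

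The main obstacle is the first step, and specifically the exclusion of $\{s,t\}$ from $S'$; this is exactly what the choice $M = C^1 + 1$ is engineered for, and it is the only place in the argument where the precise magnitude of $M$ matters. Everything else is a direct application of \Cref{observation:spanner_adding_edges_with_c1=0} and \Cref{lemma:BUCO-Spanner-value}, so I expect no further subtleties.
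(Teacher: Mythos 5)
Your proposal is correct and follows essentially the same route as the paper's proof: rule out $\{s,t\}$ via the cost $M=C^1+1$, normalize the dominating spanner to contain all $c_1$-cost-zero edges using \Cref{observation:spanner_adding_edges_with_c1=0}, identify it with a BUCO solution, and translate the domination through \Cref{lemma:BUCO-Spanner-value} to contradict Pareto-optimality of $x$. Your explicit remark that strictness survives the normalization is just a slightly more careful phrasing of the paper's handling of the equal-evaluation case.
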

\begin{proof}
Let $x \in \{0,1\}^n$ be a Pareto-optimal solution for the BUCO instance and let $y=( - c^{1^\mathsf{T}}x, c^{2^\mathsf{T}}x)^\mathsf{T}$ be the corresponding value vector. Let $S_x \in X$ be the spanner generated according to the algorithm defined above. Assume there is a feasible spanner $\widehat{S}$ that dominates $S_x$.
It is clear that due to the $c_1$-cost of the edge $\{s,t\}$, if $\{s,t\} \in \widehat{S}$ holds, $\widehat{S}$ does not dominate $S_x$.
Therefore, we can assume that $\{s,t\} \notin \widehat{S}$ holds. Furthermore, with \Cref{observation:spanner_adding_edges_with_c1=0} and w.l.o.g., we can assume that $\widehat{S}$ contains the edges $\{v_i,v_i'\}$, $\{v_i,w_i\}$ for all $i \in [n]$ and $\{w_i,v_{i+1}\}$ for $i \in [n-1]$. Based on these observations, we can say that $\widehat{S}$ meets the specifications for a spanner that corresponds to a BUCO solution. Let us denote this BUCO solution as $\widehat{x}$ and its value vector as $\widehat{y}$. From now on we refer to $\widehat{S}$ as $S_{\widehat{x}}$. We use \Cref{lemma:BUCO-Spanner-value} to show that if $S_x$ is dominated by $S_{\widehat{x}}$, $\widehat{x}$ dominates $x$. Therefore, causing a contradiction.
Assume $S_{\widehat{x}}$ dominates $S_x$, then $f_1(S_{\widehat{x}}) \leq f_1(S_x)$ and $f_2(S_{\widehat{x}}) \leq f_2(S_x)$ hold. We know that for all spanners $S \in X$ based on a BUCO solution $x'$ with value vector $y_{x'}$, $f_1(S)=C^1 + y^1_{x'} \text{ and } f_2(S_x)=y^2_{x'} +3n-1$ hold. In the first objective function, we therefore get:

\[f_1(S_{\widehat{x}}) \leq f_1(S_x) \Leftrightarrow C^1+ \widehat{y}^1 \leq C^1 + y^1 \Leftrightarrow \widehat{y}^1 \leq y^1.\]
In the second objective function, we get:

\[f_2(S_{\widehat{x}}) \leq f_2(S_x) \Leftrightarrow  \widehat{y}^2+3n-1 \leq y^2 +3n-1 \Leftrightarrow  \widehat{y}^2 \leq y^2.\]
Therefore, either $\widehat{x}$ dominates $x$, which contradicts the assumption that $x$ is Pareto-optimal or $\widehat{x}$ has the same evaluation as $x$, but in that case $S_{\widehat{x}}$ and $S_x$ also have the same evaluation. Hence, the value vector of $S_{\widehat{x}}$ does not dominate the value vector of $S_x$. Therefore, $S_x$ is Pareto-optimal. \hfill \qed
\end{proof}

In order to complete the verification of this reduction, we now prove that there are only polynomially many non-dominated value vectors in the non-dominated set of $I$ that are not based on a Pareto-optimal solution of the BUCO instance.
Consider that the only way a spanner $S$ can divert from the form of a BUCO solution based spanner, without its value vector being dominated, is by containing the edge $\{s,t\}$. Hence, combining \Cref{lemma:poly_restricted_of} with the following Lemma proves the aforementioned claim.

\begin{lemma}\label{lemma:v_l}
For every Pareto-optimal spanner $S$ with $\{s,t\} \in S$ and $l \in [n]$ being the index of the BUCO item with the greatest $c_2$-weight for which $\{v_l',w_l\} \notin S$ holds, $f_2(S)=\frac{d^{S}_{c_2}(v_l',w_l)}{d^{E}_{c_2}(v_l',w_l)}=d^{S}_{c_2}(v_l',w_l)$.
\end{lemma}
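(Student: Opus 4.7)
The plan is to enumerate the candidate pairs that can attain $f_2(S)$ and then show the maximum is reached at $(v_l', w_l)$. By \Cref{observation:spanner_adding_edges_with_c1=0} together with Pareto-optimality of $S$, I may assume without loss of generality that $S$ contains every edge of $E$ with $c_1 = 0$, namely all $\{v_i, v_i'\}$, $\{v_i, w_i\}$, and $\{w_i, v_{i+1}\}$. Since $\{s,t\} \in S$ by hypothesis, the only edges of $E$ possibly absent from $S$ are $\{v_i', w_i\}$ for $i$ in some set $J \subseteq [n]$, which contains $l$. By \Cref{lemma:check_t-spanner}, $f_2(S) = \max_{i \in J} \frac{d^S_{c_2}(v_i', w_i)}{d^E_{c_2}(v_i', w_i)}$, and because $d^E_{c_2}(v_i', w_i) = 1$ this reduces the task to showing that $d^S_{c_2}(v_i', w_i)$ is maximized over $i \in J$ at $i = l$.

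Next I would derive an explicit formula for these distances. For $i \in J$, the only $S$-neighbor of $v_i'$ is $v_i$, so every shortest $v_i' \to w_i$ walk in $S$ starts with $\{v_i, v_i'\}$ and $d^S_{c_2}(v_i', w_i) = 1 + d^S_{c_2}(v_i, w_i)$. The key structural observation is that after deleting the rung $\{v_i, w_i\}$ from $S$, the remaining graph splits into two components joined only by the chord $\{s, t\}$; hence every non-direct $v_i \to w_i$ path must traverse the entire cycle $v_1 - w_1 - v_2 - \ldots - w_n - v_1$ the long way around. Setting the ``gadget'' cost $d_j \coloneqq 2$ if $\{v_j', w_j\} \in S$ and $d_j \coloneqq c_j^2 + 2$ otherwise, and $D \coloneqq \sum_{j=1}^n d_j$, the cheapest such detour costs $n + \sum_{j \neq i} d_j = n + D - d_i$. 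For $i \in J$ we have $d_i = c_i^2 + 2$, so $d^S_{c_2}(v_i', w_i) = 1 + \min(c_i^2 + 2,\, n + D - c_i^2 - 2)$.

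The last step is a one-variable maximization of $g(c) \coloneqq \min(c + 2,\, n + D - c - 2)$ over $\{c_i^2 : i \in J\}$. Writing $\Sigma \coloneqq \sum_{j \in J} c_j^2$ and using $D = 2n + \Sigma$, the unique peak of $g$ lies at $c^\star = (3n - 4 + \Sigma)/2$. For any $i \in J \setminus \{l\}$, the maximality of $c_l^2$ yields $\Sigma \geq c_i^2 + c_l^2 \geq 2c_i^2$, which for $n \geq 2$ forces $c_i^2 \leq c^\star$ and hence $g(c_i^2) = c_i^2 + 2$. A case split on whether $c_l^2 \leq c^\star$ or $c_l^2 > c^\star$ then finishes: in the first case $g(c_l^2) = c_l^2 + 2 \geq c_i^2 + 2$; in the second case the required $g(c_l^2) \geq c_i^2 + 2$ rearranges to $3n - 4 + \Sigma \geq c_i^2 + c_l^2$, which again follows from $\Sigma \geq c_i^2 + c_l^2$. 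The main obstacle is the structural claim in the second paragraph---that removing the rung $\{v_i, w_i\}$ from $S$ really does leave only the $\{s,t\}$-detour as an alternative to the direct edge, making the formula $n + D - d_i$ exact; once this is verified, the remaining algebra is routine.
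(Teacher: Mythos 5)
Your proof is correct, and its first half follows the same skeleton as the paper's: via \Cref{observation:spanner_adding_edges_with_c1=0} and \Cref{lemma:check_t-spanner}, both arguments reduce $f_2(S)$ to $\max_{i \in J} d^{S}_{c_2}(v_i',w_i)$ with $d^{E}_{c_2}(v_i',w_i)=1$, where $J$ is the set of indices of missing edges $\{v_i',w_i\}$. The difference is the final maximization step: the paper simply asserts that the maximum is attained at $l$, which is only immediate if $d^{S}_{c_2}(v_i',w_i)=c_i^2+3$ for all $i\in J$, i.e.\ if the direct route through the rung $\{v_i,w_i\}$ were always shortest. You instead notice the detour around the chain through $\{s,t\}$, and your structural claim is sound (each $v_j'$ is a pendant in $S$ and the gadgets form a chain closed only by the chord $\{s,t\}$), so the formula $d^{S}_{c_2}(v_i',w_i)=1+\min(c_i^2+2,\;3n+\Sigma-c_i^2-2)$ is exact; your case analysis using $\Sigma\geq c_i^2+c_l^2$ and $3n\geq 4$ whenever $|J|\geq 2$ then correctly shows the maximum is still attained at the index with largest $c_l^2$, even though the distances are not monotone in $c_i^2$. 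So your argument proves the stated equality with strictly more justification than the paper records. A useful by-product of your formula: when the detour branch is strictly active, $f_2(S)=3n+\Sigma-c_l^2-1$ rather than $c_l^2+3$, which would threaten the paper's closing remark that only $n+1$ values of $f_2$ occur; this is repaired by noting that such a spanner is dominated by the BUCO-type spanner $S_x$ with $x$ selecting exactly $J\setminus\{l\}$ (equal $f_2$ by \Cref{lemma:BUCO-Spanner-value}, strictly smaller $f_1$ since $\{s,t\}$ is omitted), hence is not Pareto-optimal, so for the spanners the lemma addresses the direct branch always realizes the minimum.
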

\begin{proof}
Let $S$ with $\{s,t\} \in S$ be a Pareto-optimal spanner and let $l \in [n]$ be the index of the BUCO item with the greatest $c_2$-weight for which $\{v_l',w_l\} \notin S$ holds. With \Cref{lemma:check_t-spanner}, and since $\{s,t\} \in S$ and \Cref{observation:spanner_adding_edges_with_c1=0} hold, we know that in order to determine $f_2(S)$, we only have to consider edges $\{v_i',w_i\}$, for all $i \in [n]$. Thus,
\begin{align*}
f_2(S) =\max_{u,v \in V} \frac{d^{S}_{c_2}(u,v)}{d^{E}_{c_2}(u,v)}=\max_{v_i',w_i \in V} \frac{d^{S}_{c_2}(v_i',w_i)}{d^{E}_{c_2}(v_i',w_i)} = \max_{v_i',w_i \in V} \frac{d^{S}_{c_2}(v_i',w_i)}{1} = d^{S}_{c_2}(v_l',w_l).
\end{align*}
Note, that there is an additional edge-case, in which the spanner $S$ contains every $e \in E$. Hence, there are only $n+1$ possible values in the range of the second objective function if the considered spanner contains the edge $\{s,t\}$. \hfill \qed
\end{proof}
With \Cref{lemma:poly_restricted_of}, we infer there are only $n+1$ non-dominated value vectors, in the non-dominated set of the constructed MSp instance, that do not correspond to a Pareto-optimal BUCO solution.
Finally, we describe the algorithm that solves any BUCO instance in output-polynomial time, assuming that there is an output-polynomial algorithm $\mathcal{A}$ capable of solving any MSp instance.

Let $I$ be a BUCO instance, and let $\mathcal{A}$ be an algorithm capable of solving a MSp instance in output-polynomial time. We begin by using the algorithm described above to transform $I$ into the corresponding MSp instance $I'$.
Subsequently, we solve $I'$ using the algorithm $\mathcal{A}$ and receive the set of non-dominated value vectors $\mathcal{Y}_N^{\text{MSp}}$. We know that for every Pareto-optimal solution $x \in \{0,1\}^n$ of the BUCO instance, the constructed MSp instance contains a corresponding Pareto-optimal spanner $S_x$. Therefore, we can assume that for every such $x$, $f(S_x)=(f_1(S_x),f_2(S_x))^\mathsf{T} \in \mathcal{Y}_N^{\text{MSp}}$ holds. Now, we have to filter out all the non-dominated value vectors that do not correspond to a feasible BUCO solution. We do this by inspecting the $y^1$ value for each $y \in  \mathcal{Y}_N^{\text{MSp}}$. If $y^1 \geq M$ holds, then the spanner corresponding to $y$ contains the edge $\{s,t\}$ and consequently is not based on a feasible BUCO solution. If $y^1 \leq C^1 < M$ holds, we transform $y$ according to \Cref{lemma:BUCO-Spanner-value}, so that its values match the ones of the corresponding BUCO solution. We construct $\hat{y}=( y^1-C^1 ,y^2 -3n+1)$ and add it to the set of non-dominated value vectors of the initial BUCO instance $\mathcal{Y}_N^{\text{BUCO}}$. All of these steps are output-polynomial with regard to the BUCO instance $I$ and therefore, the existence of an output-polynomial algorithm for the MSp directly implies the existence of an output-polynomial algorithm for the BUCO problem. Thus, \Cref{theorem:MSp_BUCO_hard} holds.

An analogous reduction can be conducted for the diMSp, by replacing every undirected edge with an arc. This transformation works similar to the one conducted at the end of \Cref{section:Intractability}. Note that the diMSp instances require an additional arc $(v_i',v_i)$ with edge-weights $(0,1)^\mathsf{T}$, for every $i \in [n]$. These arcs ensure that the directed spanners constructed during the reduction are feasible. Observe that the resulting diMSp instances remain outerplanar but are only degree-4 bounded.

\section{Extreme Points}\label{section:Extreme Points}
In this section we define the problem of determining the set of extreme points of a given (di)MSp instance and consider its output-sensitive complexity.

For all $y' \in \mathcal{Y}$, define $W(y')$ as the set of vectors $\lambda \in \mathbb{R}^d_{\geq}, \lambda \neq 0$, so that  $\min_{y \in \mathcal{Y}} \lambda^\mathsf{T} y= \lambda^\mathsf{T} y'$.
A value vector $y' \in \mathcal{Y}$ is called an \emph{extreme point} if there is a $\lambda \in \mathbb{R}^d_{\geq}, \lambda \neq 0$ with
$\lambda \in W(y')$ and $\forall y \in \mathcal{Y}_N \setminus \{y'\}\colon \lambda \notin W(y)$ \cite[Definition 8.7]{ehrgott2005}.
For an instance of the (di)MSp, we define \emph{(di)MSp\textsuperscript{YEx}} to be the problem of computing its set of extreme points $\mathcal{Y}_X$.

We now show that if \textbf{P} $\neq$ \textbf{NP}, even the unweighted MSp\textsuperscript{YEx} cannot be solved in output-polynomial time. With \Cref{theorem:unweighted_MSp_OP}, we know that we cannot compute the entire non-dominated set of MSp instances in output-polynomial time. However, this does not imply the output-sensitive complexity of computing their set of extreme points.

We do this by conducting an indirect reduction of 3SAT.
Consider Cai's reduction of 3SAT to t-Spanner\textsuperscript{DEC}\cite{CAI1994187} for every $2\leq t \in \mathbb{N}$. We turn Cai's constructed 2-Spanner\textsuperscript{DEC} instances into MSp\textsuperscript{YEx} instances and show that iff the initial 3SAT instance is a yes-instance, the yes-witness for the 2-Spanner\textsuperscript{DEC} instance creates an extreme point in the corresponding MSp\textsuperscript{YEx} instance.
In consideration of \Cref{observation:trivial_ewf_restricted_YN}, any output-polynomial algorithm capable of solving unweighted MSp\textsuperscript{YEx} instances in output-polynomial time could solve 3SAT in polynomial time.
We begin with a quick summary of Cai's proof for the special case of $t=2$.

\subsubsection{Revisiting Cai's Proof.}\label{subsubsection:Cais_proof}
Cai transforms 3SAT to 2-Spanner\textsuperscript{DEC}. Given an instance $I=(U, C)$ of 3SAT consisting of a set $U$ of $n$ distinct variables and a collection $C$ of $m$ 3-element clauses over $U$. They construct a 2-Spanner\textsuperscript{DEC} instance $\hat{I}$, with a graph $G=(V,E)$ and a positive integer $K \in \mathbb{N}_+$, such that $G$ contains a 2-spanner with at most $K$ edges if and only if $C$ is satisfiable.
They define a \emph{2-path} as a path with 2 edges. One can force an edge to be in any minimum 2-spanner of a graph by the addition of two distinct 2-paths between the two ends of the edge \cite[Lemma 3]{CAI1994187}. This operation is called \emph{forcing an edge}.
Such an edge is called a \emph{forced edge} and the two 2-paths are called \emph{forcing paths}.

They construct the \emph{truth-setting component} $T$ as follows: They take five vertices $z$, the literal vertices $x$ and $\bar{x}$, and the y-type vertices $y$ and $y'$; and join $z$ to each of the remaining four vertices by an edge. Finally, they add forced edges $\{x,\bar{x}\}, \{x,y\}, \{x,y'\}, \{\bar{x},y\}, \{\bar{x},y'\}$.

They assign each variable $u_i \in U$, $i \in [n]$ a distinct copy $T_i$ of $T$ and identify all vertices $z_i$ into a single vertex $z$ to form a subgraph $T'$ of $G$. To finish the construction of $G$, they create a new vertex $v_i$ for each clause $c_i \in C$, $i \in [m]$, join it to vertex $z$ with an edge and add a forced edge between $v_i$ and each of the three literal vertices in $T'$ corresponding to the three literals of $c_i$.

They finish the construction of the 2-Spanner\textsuperscript{DEC} instance, by setting $K=16n+9m$. 
For the constructed graph $G$, it holds that any minimum 2-spanner $S$ of $G$ contains at least $K$ edges. Furthermore, if $S$ contains exactly $K$ edges, then for each $T_i$, $i \in [n]$ exactly one of the two literal edges $\{z,x_i\}$ and $\{z,\bar{x_i}\}$ belongs in $S$ \cite[Lemma 4]{CAI1994187}.
Examples of the described constructions can be seen in Figures \ref{apendix:figure:WSS_Cai_T2} and \ref{apendix:figure:WSS_Cai_complete} in the appendix.


Now, suppose that $C$ is satisfiable and let $\phi$ be a satisfying truth assignment for $C$.
They construct a yes-witness-spanner $S_w$ as follows: put every forced edge in $S_w$. For each forcing path, put one of the two edges in $S_w$. Finally, for each variable $u_i \in U$, if $u$ is \enquote{true} under $\phi$ then put edge $\{z,x_i\}$, in $S_w$ else put edge $\{z,\bar{x_i}\}$ in $S_w$.
The complete proof that this is a correct reduction goes beyond the scope of this paper and can be found in the original paper \cite{CAI1994187}. Instead, let us now construct an equivalent MSp\textsuperscript{YEx} instance and prove that if the initial 2-Spanner\textsuperscript{DEC} instance is a yes-instance, the value vector of the yes-witness-spanner is an extreme point.

\subsubsection{The Associated MSp\textsuperscript{YEx} Instance.}
First, let $I$ be a 3SAT instance and let $\hat{I}=(G=(V,E),K)$ be the associated 2-Spanner\textsuperscript{DEC} instance, constructed according to the algorithm described in Cai's proof. We turn $\hat{I}$ into an instance $I'=(G,c)$ of the unweighted MSp\textsuperscript{YEx} by copying $G$ and adding the trivial edge-weight function $c\colon E\rightarrow \{(1,1)^\mathsf{T}\}$, $c(e)=(1,1)^\mathsf{T}$ for all $e \in E$. Clearly, this can be done in polynomial time. Note, that $I'$ is a valid, unweighted MSp\textsuperscript{YEx} instance. Now, let $\hat{I}$ be a yes-instance and let $S_w$ be the yes-witness-spanner. It is clear that the same spanner exists in $I'$ and that 
\[f_1(S_w)=\sum_{e \in S_w}c_1(e)=\sum_{e \in S_w} 1 = |S_w|=K
\text{ and }
f_2(S_w)=\max_{u,v \in V} \frac{d^{S_w}_{c_2}(u,v)}{d^{E}_{c_2}(u,v)}=2\]
hold.
We now show that the value vector of the yes-witness-spanner $f(S_w)$ is an extreme point by finding the non-dominated value vectors $\mathcal{Y}_N$ and showing that there is a $\lambda_{w} \in \mathbb{R}^d_{\geq}, \lambda_{w} \neq 0$, so that $\lambda_{w} \in W(f(S_w)) \text{ and } \forall y \in \mathcal{Y}_N \setminus \{f(S_w)\}\colon \lambda_{w} \notin W(y)$ hold.

Note that $f(S_w)$ dominates or is equal to every value vector $y \in \mathcal{Y}$ with $y^1\geq K$ and $y^2 \geq 2$. Let us first find the non-dominated value vectors $y \in \mathcal{Y}_N$ with $y^2<2$. We begin by constructing the spanner $S_1 \subseteq E$ that contains the fewest edges for which $f_2(S_1) < f_2(S_w)=2$ holds.
Consider $S_1=E$ and observe that if we remove any edge $e \in E$ from $S_1$, $f_2(S_1 \setminus \{e\})=2$ holds. Hence, $S_1=E$ and we conclude that for every feasible spanner $S \subseteq E$ with $K<f_1(S)<|E|$, $f(S_w)$ dominates $f(S)$. Consider the amount of edges in $G$. For every truth-setting component $T_i$, $i \in [n]$ there are $20$ forcing edges, $5$ forced edges and $4$ edges connecting to the vertex $z$. Furthermore, for each clause $c_i \in C$, $i \in [m]$ there are $12$ forcing edges, $3$ forced edges and one edge connecting to $z$. Hence,
\[f_1(S_1)=|E|=29n + 16m \text{ and } f_2(S_1)=1.\]

We now construct a hypothetical value vector $y_h$ that either dominates or is equal to every value vector $f(S)$ of feasible spanners $S$ with $f_1(S)<K$. We begin by constructing $y^1_h$. 
Consider the number of vertices in $G$.
For every truth-setting component $T_i$, $i \in [n]$ there are $10$ vertices that are part of forcing paths and $4$ vertices $x_i,\bar{x_i},y_i,y_i'$. Furthermore, for each clause $c_i \in C$, $i\in [m]$ there is one vertex $v_i$ and $6$ vertices that are part of forcing paths. Finally, there is the vertex $z$. Thus, there are $14n+7m+1$ vertices in the $G$.
Therefore, for every feasible spanner $S$ of $G$, $f_1(S)=|S| \geq 14n+7m$. Hence, we set $y^1_h= 14n+7m$. Let us now focus on $y^2_h$. We know that there are no spanners $S$ that contain fewer edges than $S_w$ for which $f_2(S)\leq 2$ hold. Hence, we set $y^2_h=3$. Thus,
\[ y_h^1= 14n+7m \text{ and } y_h^2=3.\]
Clearly, for every value vector $f(S)$ of a feasible spanner $S$ with $f_1(S)<K$, $y^1_h\leq f_1(S)$ and $y^2_h\leq f_2(S)$ hold. Hence, $y_h$ either dominates or is equal to every such value vector and thus, for every $\lambda \in \mathbb{R}^d_{\geq}, \lambda \neq 0$, $\lambda^\mathsf{T} y_h \leq \lambda^\mathsf{T} \cdot f(S)$ holds.
An expanded proof for the following Lemma can be found in \Cref{appendix:Extreme_Points}.
\begin{lemma}
The value vector of the yes-witness-spanner is an extreme point.
\end{lemma}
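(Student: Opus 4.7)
The plan is to exhibit an explicit weight vector $\lambda_w = (\lambda_1, \lambda_2)^\mathsf{T} \in \mathbb{R}^2_{\geq} \setminus \{0\}$ for which $y_w := f(S_w) = (K, 2)^\mathsf{T}$ is the \emph{unique} minimizer of $\lambda_w^\mathsf{T} y$ over $\mathcal{Y}_N$. Such a $\lambda_w$ certifies $\lambda_w \in W(y_w)$ and $\lambda_w \notin W(y)$ for every other non-dominated value vector $y$, which is precisely what the definition of an extreme point demands.

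First I would catalog $\mathcal{Y}_N$ by stratifying on the second coordinate. Because $I'$ is unweighted, $d^E_{c_2}(u,v) = 1$ for every $\{u,v\} \in E$, and \Cref{lemma:check_t-spanner} then forces $f_2(S) \in \mathbb{N}_+$ for every spanner $S$. This integrality rules out any non-dominated value vector with $y^2 \in (1,2) \cup (2,3)$. The single non-dominated point with $y^2 = 1$ is the one induced by $S_1 = E$, namely $y_1 := (29n+16m,\, 1)^\mathsf{T}$. By the Cai-style lower bound on the size of a minimum $2$-spanner (recalled from Section~\ref{subsubsection:Cais_proof}), every spanner with $f_2(S) = 2$ satisfies $f_1(S) \geq K$, so $y_w$ is the only non-dominated value vector with $y^2 = 2$; every other spanner attaining $f_2 = 2$ strictly increases $f_1$ and is dominated by $S_w$. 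Each remaining non-dominated $y$ has $y^2 \geq 3$, and the hypothetical bound $y_h = (14n+7m,\, 3)^\mathsf{T}$ established above yields $y^1 \geq 14n+7m$.

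Next I would set up and solve the two linear inequalities that determine $\lambda_w$. The strict inequality $\lambda_w^\mathsf{T} y_w < \lambda_w^\mathsf{T} y_1$ reduces to an upper bound of the form $\lambda_2 < \lambda_1 (13n+7m)$, while $\lambda_w^\mathsf{T} y_w < \lambda_w^\mathsf{T} y$ for every $y$ in the $y^2 \geq 3$ region, applied to the component-wise lower bounds from $y_h$, reduces to $\lambda_2 > \lambda_1 (2n+2m)$. Since a nontrivial 3SAT instance has $n, m \geq 1$, the open interval $(2n+2m,\, 13n+7m)$ is nonempty, and I would pick any rational $\lambda_w$ inside it, for instance $\lambda_w = (1,\, 2n+2m+1)^\mathsf{T}$; it then remains to verify that both strict inequalities are in fact strict for every $y \in \mathcal{Y}_N \setminus \{y_w\}$, which is immediate from the bounds.

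The main obstacle will be the first step. If $f_2$ could take non-integer values, the non-dominated set could in principle contain points with $y^2$ just above $1$ and $y^1$ just above $K$; such points would not be dominated by $y_w$ and they would force $\lambda_2$ to be arbitrarily close to $0$, conflicting with the lower bound coming from the $y^2 \geq 3$ region. The unweighted assumption, via the identity $d^E_{c_2}(u,v) = 1$ on edges of $G$, is precisely what excludes this scenario and keeps the admissible interval for $\lambda_2 / \lambda_1$ nonempty.
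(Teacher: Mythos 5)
Your proposal is correct and takes essentially the same approach as the paper: it stratifies $\mathcal{Y}_N$ using the three reference points $f(S_1)=(29n+16m,1)^\mathsf{T}$, $f(S_w)=(K,2)^\mathsf{T}$ and the dominating bound $y_h=(14n+7m,3)^\mathsf{T}$, and then exhibits a weight vector whose slope lies in the admissible interval so that $f(S_w)$ is the unique minimizer of the weighted sum. The only difference is the concrete choice of $\lambda_w$ (you take $(1,2n+2m+1)^\mathsf{T}$ inside the interval $(2n+2m,13n+7m)$, the paper takes $(2,15n+9m)^\mathsf{T}$, which balances $f(S_1)$ and $y_h$), which does not change the argument.
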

\begin{proof}
Consider the vector $\lambda_{w}=(2,15n+9m)^\mathsf{T} \in \mathbb{R}^d_{\geq}, \lambda_{w} \neq 0$. We show that $\lambda_{w} \in W(f(S_w))$ holds and that for all $y \in \mathcal{Y}_N \setminus \{f(S_w)\}\colon \lambda_{w} \notin W(y)$.

\begin{gather*}
    \lambda_{w}^\mathsf{T} \cdot f(S_1) = 73n+41m = \lambda_{w}^\mathsf{T} y_h\\
    \lambda_{w}^\mathsf{T} \cdot f(S_w) = 62n+36m < 73n+41m = \lambda_{w}^\mathsf{T} \cdot f(S_1) = \lambda_{w}^\mathsf{T} y_h, 
\end{gather*}
for $n,m >0$. Hence, $\lambda_{w} \in W(f(S_w))$ and for all $y \in \mathcal{Y}_N \setminus \{f(S_w)\} \colon \lambda_{w} \notin W(y)$. \hfill \qed
\end{proof}
A sketch of the relevant value vectors can be seen in \Cref{apendix:figure:WSS_Punkte} in the appendix.

Finally, let us consider the entire reduction.
Suppose there is an algorithm $\mathcal{A}$ capable of solving unweighted MSp\textsuperscript{YEx} instances in output-polynomial time.
We know that iff the initial 3SAT instance $I$ is a yes-instance, the 2-Spanner\textsuperscript{DEC} instance $\hat{I}$ constructed according to Cai's proof is a yes-instance too. Therefore, the yes-witness-spanner $S_w$ exists in $\hat{I}$ and thus, also in the associated MSp\textsuperscript{YEx} instance $I'$.
Since the value vector $f(S_w)$ of $S_w$ is an extreme point, it is therefore part of the output $\mathcal{Y}_X$ of algorithm $\mathcal{A}$, when applied to $I'$.
In consideration of $\mathcal{Y}_X \subseteq \mathcal{Y}_N$ and \Cref{lemma:poly_restricted_of} we infer that solving $I'$ with $\mathcal{A}$ and checking whether $f(S_w) \in \mathcal{Y}_X$ holds is possible in polynomial time in the size of $I$.
In conclusion, if $\mathcal{A}$ existed, we could solve 3SAT in polynomial time. Thus, \Cref{theorem:MSP_YEx_notin_OP} holds.


Similarly, based on Cai's reduction of 3SAT to directed 2-Spanner\textsuperscript{DEC} \cite[Section 3]{CAI1994187}, we can show the same results for the diMSp\textsuperscript{YEx}. The only difference to the undirected case lies in the construction of the directed 2-Spanner\textsuperscript{DEC} instance by Cai.
These differences in turn cause slightly different values in the objective functions of the considered spanners in the diMSp\textsuperscript{YEx} instance that we construct analogously to the undirected case. It is easy to see that these differences have no influence on the validity of the statement. Due to the analogy of the proofs, we leave the details to the reader.

\section{Conclusion}\label{section:Conclusion}
What remains open is the output-sensitive complexity of computing the set of extreme points for degree-3 bounded outerplanar instances, as it is currently unknown whether there is a stretch factor $t$, such that the t-Spanner\textsuperscript{DEC} problem is \textbf{NP}-hard under these restrictions. Thus, such a prove requires a different approach than the one used in \Cref{section:Extreme Points}.

Future work might include the development of approximation techniques for the MSp and related problems, as well as investigating what existing approaches can be applied to them.

\printbibliography

@Article{corneil1995,
  author  = {Cai, Leizhen and Corneil, {Derek G.}},
  journal = {SIAM Journal on Discrete Mathematics},
  title   = {{T}ree {S}panners},
  year    = {1995},
  number  = {3},
  pages   = {359--387},
  volume  = {8},
  doi     = {10.1137/S0895480192237403},
  eprint  = {https://doi.org/10.1137/S0895480192237403},
  url     = {https://doi.org/10.1137/S0895480192237403},
}

@PhdThesis{boklerDIS,
  author = {B{\"o}kler, Fritz},
  school = {Technische Universität Dortmund},
  title  = {Output-sensitive {C}omplexity of {M}ultiobjective {C}ombinatorial {O}ptimization with an {A}pplication to the {M}ultiobjective {S}hortest {P}ath {P}roblem},
  year   = {2018},
  type   = {phdthesis},
}

@Article{CAI1994187,
  author   = {Leizhen Cai},
  journal  = {Discrete Applied Mathematics},
  title    = {{NP}-completeness of minimum spanner problems},
  year     = {1994},
  issn     = {0166-218X},
  number   = {2},
  pages    = {187--194},
  volume   = {48},
  abstract = {A t-spanner of a graph G is a spanning subgraph S in which the distance between every pair of vertices is at most t times their distance in G. This notion is motivated by applications in distributed systems, communication networks, computational geometry and robotics. In this paper, it is shown that for any fixed t ≥ 2, the problem of determining, for a graph G and a positive integer K, whether G contains a t-spanner with at most K edges is NP-complete, even if G is a bipartite graph (for fixed t ≥ 3). The problem for digraphs is also shown to be NP-complete, even for oriented graphs (with fixed t ≥ 3).},
  doi      = {https://doi.org/10.1016/0166-218X(94)90073-6},
  url      = {https://www.sciencedirect.com/science/article/pii/0166218X94900736},
}

@Article{BUCO_complexity2020,
  author    = {{B}ökler , {F}ritz and {E}hrgott, {M}atthias and {{R}ui {F}igueira}, {J}os\'e and {{P}. {G}uerreiro} , {A}ndreia and {K}lamroth, {K}athrin and {S}chulze, {B}ritta and {V}anderpooten, {D}aniel},
  journal   = {Dagstuhl Reports},
  title     = {{T}he {O}utput-sensitive {C}omplexity of the {BUCO} {P}roblem},
  year      = {2020},
  issn      = {2192-5283},
  number    = {1},
  pages     = {76--78},
  volume    = {10},
  address   = {Dagstuhl, Germany},
  annote    = {Keywords: multiple criteria decision making, evolutionary multiobjective optimization, scalability},
  booktitle = {Scalability in Multiobjective Optimization},
  chapter   = {4.2},
  doi       = {10.4230/DagRep.10.1.52},
  publisher = {Schloss Dagstuhl--Leibniz-Zentrum f{\"u}r Informatik},
  type      = {resreport},
  url       = {https://drops.dagstuhl.de/opus/volltexte/2020/12401},
  urn       = {urn:nbn:de:0030-drops-124017},
}

@Article{kobayashi2018,
  author    = {Kobayashi, Yusuke},
  journal   = {Theoretical Computer Science},
  title     = {{NP}-hardness and fixed-parameter tractability of the minimum spanner problem},
  year      = {2018},
  pages     = {88--97},
  volume    = {746},
  publisher = {Elsevier},
}

@Article{peleg1989,
  author    = {Peleg, David and Sch{\"a}ffer, {Alejandro A.}},
  journal   = {Journal of graph theory},
  title     = {Graph spanners},
  year      = {1989},
  number    = {1},
  pages     = {99--116},
  volume    = {13},
  publisher = {Wiley Online Library},
}

@Book{diestel2017,
  author    = {Diestel, Reinhard},
  publisher = {Springer, Berlin, Heidelberg},
  title     = {Graph {T}heory},
  year      = {2017},
  series    = {Graduate Texts in Mathematics 173},
  booktitle = {Graph Theory},
}

@Book{bang2008,
  author    = {{Bang-{J}ensen}, J{\o}rgen and Gutin, {Gregory Z.}},
  publisher = {Springer Science \& Business Media, London, England},
  title     = {Digraphs: {T}heory, {A}lgorithms and {A}pplications},
  year      = {2008},
  edition   = {Second},
  series    = {Springer Monographs in Mathematics},
}

@Book{ehrgott2005,
  author    = {Ehrgott, Matthias},
  publisher = {Springer Science \& Business Media, Berlin, Heidelberg},
  title     = {Multicriteria optimization},
  year      = {2005},
}

@Article{hamacher1994,
  author    = {Hamacher, {Horst W.} and Ruhe, G{\"u}nter},
  journal   = {Annals of Operations Research},
  title     = {On spanning tree problems with multiple objectives},
  year      = {1994},
  number    = {4},
  pages     = {209--230},
  volume    = {52},
  publisher = {Springer},
}

@Article{althofer1993,
  author    = {Alth{\"o}fer, Ingo and Das, Gautam and Dobkin, David and Joseph, Deborah and Soares, Jos{\'e}},
  journal   = {Discrete \& Computational Geometry},
  title     = {On {S}parse {S}panners of {W}eighted {G}raphs},
  year      = {1993},
  number    = {1},
  pages     = {81--100},
  volume    = {9},
  publisher = {Springer},
}

@InCollection{hansen1980,
  author    = {Hansen, Pierre},
  booktitle = {Multiple Criteria Decision Making Theory and Application},
  publisher = {Springer, Berlin, Heidelberg},
  title     = {Bicriterion {P}ath {P}roblems},
  year      = {1980},
  pages     = {109--127},
  volume    = {177},
}

@Article{delling2015,
  author    = {Delling, Daniel and Pajor, Thomas and Werneck, {Renato F.}},
  journal   = {Transportation Science},
  title     = {Round-{B}ased {P}ublic {T}ransit {R}outing},
  year      = {2015},
  number    = {3},
  pages     = {591--604},
  volume    = {49},
  publisher = {INFORMS},
}

@Article{Ullman1989,
  author  = {{P}eleg, David and Ullman, {Jeffrey D.}},
  journal = {SIAM Journal on Computing},
  title   = {An optimal synchronizer for the hypercube},
  year    = {1989},
  number  = {4},
  pages   = {740--747},
  volume  = {18},
  notes   = {Sourced from Microsoft Academic - https://academic.microsoft.com/paper/2018963243},
}

@Article{baswana2007,
  author    = {Baswana, Surender and Sen, Sandeep},
  journal   = {Random Structures \& Algorithms},
  title     = {A {S}imple and {L}inear {T}ime {R}andomized {A}lgorithm for Computing Sparse Spanners in Weighted Graphs},
  year      = {2007},
  number    = {4},
  pages     = {532--563},
  volume    = {30},
  publisher = {Wiley Online Library},
}

@Article{giantsoudi2013,
  author    = {Giantsoudi, Drosoula and Grassberger, Clemens and Craft, David and Niemierko, Andrzej and Trofimov, Alexei and Paganetti, Harald},
  journal   = {International Journal of Radiation Oncology Biology Physics},
  title     = {Linear energy transfer-guided optimization in intensity modulated proton therapy: feasibility study and clinical potential},
  year      = {2013},
  number    = {1},
  pages     = {216--222},
  volume    = {87},
  publisher = {Elsevier},
}

@InCollection{hamacher1999,
  author    = {Hamacher, {Horst W.} and Küfer, {Karl-Heinz}},
  booktitle = {Monitoring, Evaluating, Planning Health Services},
  publisher = {World Scientific},
  title     = {Inverse radiation therapy planning: {A} multiple objective optimisation approach},
  year      = {1999},
  pages     = {177--189},
}

@Article{thieke2007,
  author    = {Thieke, Christian and Küfer, {Karl-Heinz} and Monz, Michael and Scherrer, Alexander and Alonso, Fernando and Oelfke, Uwe and Huber, {Peter E.} and Debus, Jürgen and Bortfeld, Thomas},
  journal   = {Radiotherapy and Oncology},
  title     = {A new concept for interactive radiotherapy planning with multicriteria optimization: {F}irst clinical evaluation},
  year      = {2007},
  number    = {2},
  pages     = {292--298},
  volume    = {85},
  publisher = {Elsevier},
}

@Article{libotte2020,
  author    = {Libotte, {Gustavo Barbosa} and Lobato, {Fran S{\'e}rgio} and Platt, {Gustavo Mendes} and Neto, {Ant{\^o}nio J. Silva}},
  journal   = {Computer methods and programs in biomedicine},
  title     = {Determination of an optimal control strategy for vaccine administration in {COVID}-19 pandemic treatment},
  year      = {2020},
  volume    = {196},
  publisher = {Elsevier},
}

@Article{emelichev1992,
  author    = {Emelichev, {Vladimir A.} and Perepelitsa, {V. A.}},
  journal   = {Discrete Mathematics and Applications},
  title     = {On cardinality of the set of alternatives in discrete many-criterion problems},
  year      = {1992},
  number    = {5},
  pages     = {461--471},
  volume    = {2},
  publisher = {Walter de Gruyter, Berlin/New York},
}

@InProceedings{wagner2017,
  author       = {Wagner, Dorothea and Zündorf, Tobias},
  booktitle    = {Workshop on Algorithmic Approaches for Transportation Modelling, Optimization, and Systems},
  title        = {Public {T}ransit {R}outing with {U}nrestricted {W}alking},
  year         = {2017},
  organization = {Schloss Dagstuhl-Leibniz-Zentrum fuer Informatik},
}

@Article{caunhye2012,
  author    = {Caunhye, {Aakil M.} and Nie, Xiaofeng and Pokharel, Shaligram},
  journal   = {Socio-economic Planning Sciences},
  title     = {Optimization models in emergency logistics: {A} literature review},
  year      = {2012},
  number    = {1},
  pages     = {4--13},
  volume    = {46},
  publisher = {Elsevier},
}

@Article{boonmee2017,
  author    = {Boonmee, Chawis and Arimura, Mikiharu and Asada, Takumi},
  journal   = {International Journal of Disaster Risk Reduction},
  title     = {Facility location optimization model for emergency humanitarian logistics},
  year      = {2017},
  pages     = {485--498},
  volume    = {24},
  publisher = {Elsevier},
}

@Article{ozdamar2004,
  author    = {{\"O}zdamar, Linet and Ekinci, Ediz and K{\"u}{\c{c}}{\"u}kyazici, Beste},
  journal   = {Annals of Operations Research},
  title     = {Emergency logistics planning in natural disasters},
  year      = {2004},
  number    = {1},
  pages     = {217--245},
  volume    = {129},
  publisher = {Springer},
}

@Article{chen2020,
  author    = {Chen, Yanyu and Zheng, Wenzhe and Li, Wenbo and Huang, Yimiao},
  journal   = {Journal of Advanced Transportation},
  title     = {The {R}obustness and {S}ustainability of {P}ort {L}ogistics {S}ystems for {E}mergency {S}upplies from {O}verseas},
  year      = {2020},
  pages     = {1--10},
  volume    = {2020},
  publisher = {Hindawi},
}

@InProceedings{Bokler2015,
author="B{\"o}kler, Fritz
and Mutzel, Petra",
title="Output-Sensitive Algorithms for Enumerating the Extreme Nondominated Points of Multiobjective Combinatorial Optimization Problems",
booktitle="Algorithms - ESA 2015",
year="2015",
publisher="Springer, Berlin, Heidelberg", %Berlin Heidelberg",
%address="Berlin, Heidelberg",
pages="288--299",
abstract="This paper studies output-sensitive algorithms for enumeration problems in multiobjective combinatorial optimization (MOCO). We develop two methods for enumerating the extreme points of the Pareto-frontier of MOCO problems. The first method is based on a dual variant of Benson's algorithm, which has been originally proposed for multiobjective linear optimization problems. We prove that the algorithm runs in output polynomial time for every fixed number of objectives if the weighted-sum scalarization can be solved in polynomial time. Hence, we propose the first algorithm which solves this general problem in output polynomial time. We also propose a new lexicographic version of the dual Benson algorithm that runs in incremental polynomial time in the case that the lexicographic optimization variant can be solved in polynomial time. As a consequence, the extreme points of the Pareto-frontier of the multiobjective spanning tree problem as well as the multiobjective global min-cut problem can be computed in polynomial time for a fixed number of objectives. Our computational experiments show the practicability of our improved algorithm: We present the first computational study for computing the extreme points of the multiobjective version of the assignment problem with five and more objectives. We also empirically investigate the running time behavior of our new lexicographic version compared to the original algorithm.",
isbn="978-3-662-48350-3"
}

@article{Bokler2017,
  title={Output-sensitive complexity of multiobjective combinatorial optimization},
  author={B{\"o}kler, Fritz and Ehrgott, Matthias and Morris, Christopher and Mutzel, Petra},
  journal={Journal of Multi-Criteria Decision Analysis},
  volume={24},
  number={1-2},
  pages={25--36},
  year={2017},
  publisher={Wiley Online Library}
}

@Article{CaiKeil1994,
  author   = {Cai, Leizhen and Keil, Mark},
  journal  = {Networks},
  title    = {Spanners in graphs of bounded degree},
  year     = {1994},
  number   = {4},
  pages    = {233-249},
  volume   = {24},
  doi      = {https://doi.org/10.1002/net.3230240406},
  eprint   = {https://onlinelibrary.wiley.com/doi/pdf/10.1002/net.3230240406},
  url      = {https://onlinelibrary.wiley.com/doi/abs/10.1002/net.3230240406},
}

@article{johnson1988,
  title={On generating all maximal independent sets},
  author={Johnson, David S and Yannakakis, Mihalis and Papadimitriou, Christos H},
  journal={Information Processing Letters},
  volume={27},
  number={3},
  pages={119--123},
  year={1988},
  publisher={Elsevier}
}

@article{kortsarz1994,
  title={Generating {S}parse 2-{S}panners},
  author={Kortsarz, Guy and Peleg, David},
  journal={Journal of Algorithms},
  volume={17},
  number={2},
  pages={222--236},
  year={1994},
  publisher={Elsevier}
}

@inproceedings{BC20,
	author = {F. B{\"o}kler and M. Chimani},
	booktitle = {SIAM ALENEX},
	pages = {120--133},
	title = {Approximating Multiobjective Shortest Path in Practice},
	year = {2020}}

@article{BR15,
	author = {Tobias Brunsch and Heiko R{\"{o}}glin},
	journal = {J. {ACM}},
	number = {1},
	pages = {4:1--4:58},
	title = {Improved Smoothed Analysis of Multiobjective Optimization},
	volume = {62},
	year = {2015}}

\newpage

\begin{appendix}
\section{Intractabilty}\label{apendix:intractability}
\begin{customlemma}{4}
For all $S, S' \in  X$ with $S \neq S'$, $S$ and $S'$ do not dominate each other and have different value vectors.
\end{customlemma}
\begin{proof}
Let $S, S' \in  X$ be two different spanners and assume $S$ dominates $S'$. Therefore, either $f_1(S)<f_1(S')$ or $f_1(S)=f_1(S')$ holds. We begin by considering the first case. Let $j \in [n]$ be the greatest index at which the shortest s-t-paths in $S$ and $S'$ differ. Since $f_1(S)<f_1(S')$ holds, $S$ must not contain the edge $\{v_j, w_j\}$ while $S'$ has to contain it. Let $P$ be the remaining path that is identical for $S$ and $S'$. Thus,
\begin{align*}
	f_2(S')&= d^{S'}_{c_2}(s,t)\\
	& \leq \left( \sum_{i=1}^{j-1} c_2(v_i, v_i')+c_2(v_i', w_i)+c_2(w_i,v_{i+1}) \right) +c_2(v_j, w_j)+c_2(P)\\
	& = \left( \sum_{i=1}^{j-1} 2 \cdot 2^i +1 \right)+2^j +c_2(P) = \left( \sum_{i=2}^{j} 2^i +1 \right)+2^j +c_2(P)\\
	& < \left( \sum_{i=1}^{j-1}  2^i +1 \right)+2^j+2^j+c_2(P)\\
	& = \left(\sum_{i=1}^{j-1} c_2(v_i, w_i)+c_2(w_i,v_{i+1}) \right)+c_2(v_j, v_j')+c_2(v_j', w_j)+c_2(P)\\
	& \leq d^{S}_{c_2}(s,t)=f_2(S).
\end{align*}
This contradicts the assumed domination.

Let us now consider the second case, in which $f_1(S)=f_1(S')$ holds. Then, in order for $S$ to dominate $S'$, $f_2(S)<f_2(S')$ must hold as well. By design of the $c_1$-edge-weights, we know that in order for $f_1(S)=f_1(S')$ to hold, it is true for every edge $\{v_i, w_i\} \in E$ that $\{v_i, w_i\} \in S \Leftrightarrow \{v_i, w_i\} \in S'$. This claim can be verified by considering that every edge $\{v_i, w_i\}$ has a unique $c_1$-cost that cannot be reproduced by any combination of edges $ e \in E \setminus \{v_i, w_i\}$. Consequently, the shortest s-t-paths in $S$ and $S'$ are exactly the same and therefore $f_2(S)=f_2(S')$ holds, which contradicts the assumed domination. \hfill \qed
\end{proof}

\section{Extreme Points}\label{appendix:Extreme_Points}
\begin{customlemma}{8}
The value vector of the yes-witness-spanner $f(S_w)$ is an extreme point.
\end{customlemma}
\begin{proof}
Consider the vector $\lambda_{w}=(2,15n+9m)^\mathsf{T} \in \mathbb{R}^d_{\geq}, \lambda_{w} \neq 0$. We show that $\lambda_{w} \in W(f(S_w))$ holds and that for all $y \in \mathcal{Y}_N \setminus \{f(S_w)\}\colon \lambda_{w} \notin W(y)$. Begin by considering $\lambda_{w}^\mathsf{T} \cdot f(S_1)$ and $\lambda_{w}^\mathsf{T} y_h$.

\begin{align*}
\lambda_{w}^\mathsf{T} \cdot f(S_1) & = \begin{pmatrix}  2 & 15n+9m \end{pmatrix} \cdot \begin{pmatrix}29n + 16m \\ 1 \end{pmatrix}
= 73n+41m\\
& = \begin{pmatrix}  2 & 15n+9m \end{pmatrix} \cdot \begin{pmatrix}14n+7m \\ 3\end{pmatrix}  = \lambda_{w}^\mathsf{T} y_h
\end{align*}

Now, consider $\lambda_{w}^\mathsf{T} \cdot f(S_w)$. We get
\begin{align*}
\lambda_{w}^\mathsf{T} \cdot f(S_w) & =  \begin{pmatrix}  2 & 15n+9m \end{pmatrix} \cdot \begin{pmatrix} K \\ 2\end{pmatrix} =  \begin{pmatrix}  2 & 15n+9m \end{pmatrix} \cdot \begin{pmatrix} 16n+9m \\ 2\end{pmatrix} \\
& = 62n+36m < 73n+41m = \lambda_{w}^\mathsf{T} \cdot f(S_1) = \lambda_{w}^\mathsf{T} y_h,
\end{align*}
for $n,m >0$. Hence, $\lambda_{w} \in W(f(S_w))$ and for all $y \in \mathcal{Y}_N \setminus \{f(S_w)\} \colon \lambda_{w} \notin W(y)$. \hfill \qed
\end{proof}

\begin{center}
\begin{figure}
\centering
\includegraphics[scale=0.4]{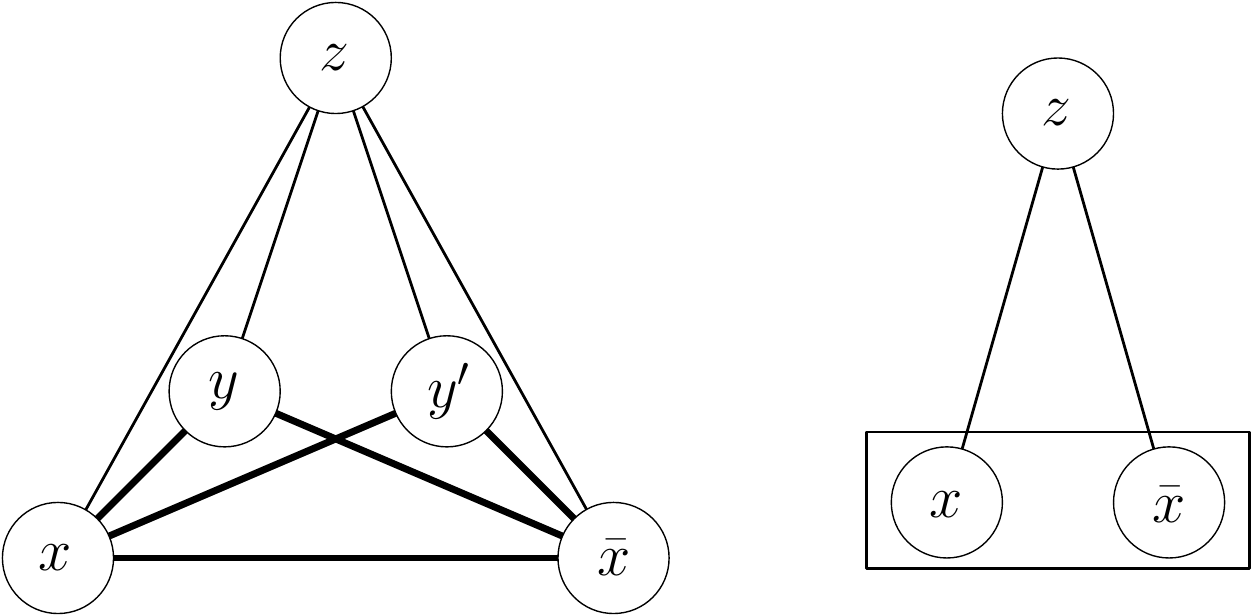}
\caption{A truth-setting component $T$ and its symbolic representation. Thick edges indicate forced edges. For clarity, forcing paths have been omitted from the figure.}
\label{apendix:figure:WSS_Cai_T2}
\end{figure}
\end{center}

\begin{center}
\begin{figure}
\centering
\includegraphics[scale=0.5]{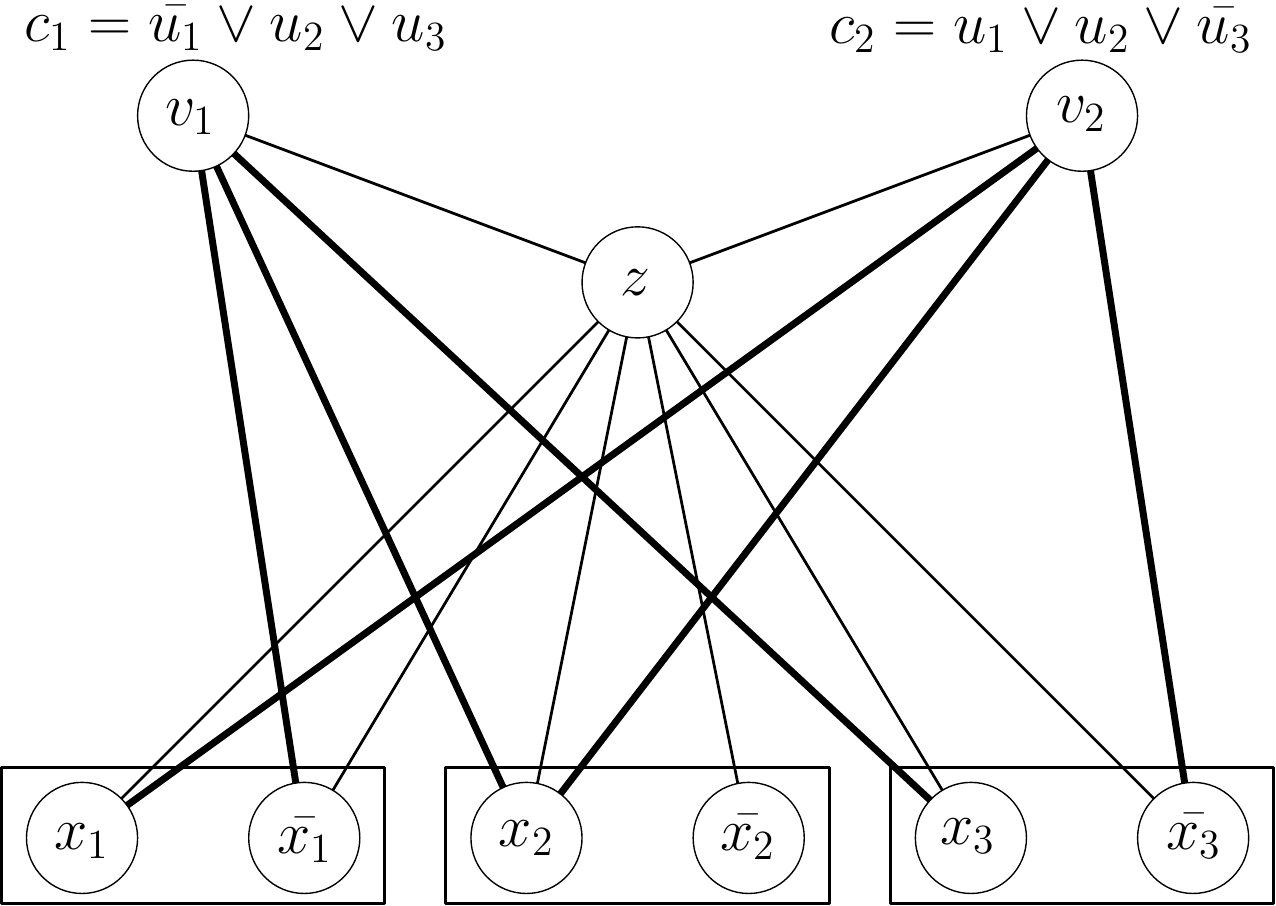}
\caption{The graph $G$ for $C=\{ \{\bar{u_1}, u_2, u_3\}, \{u_1, u_2, \bar{u_3} \} \}$. Thick edges indicate forced edges. For clarity, forcing paths have been omitted from the figure.}
\label{apendix:figure:WSS_Cai_complete}
\end{figure}
\end{center}

\begin{center}
\begin{figure}
\centering
\includegraphics[scale=0.5]{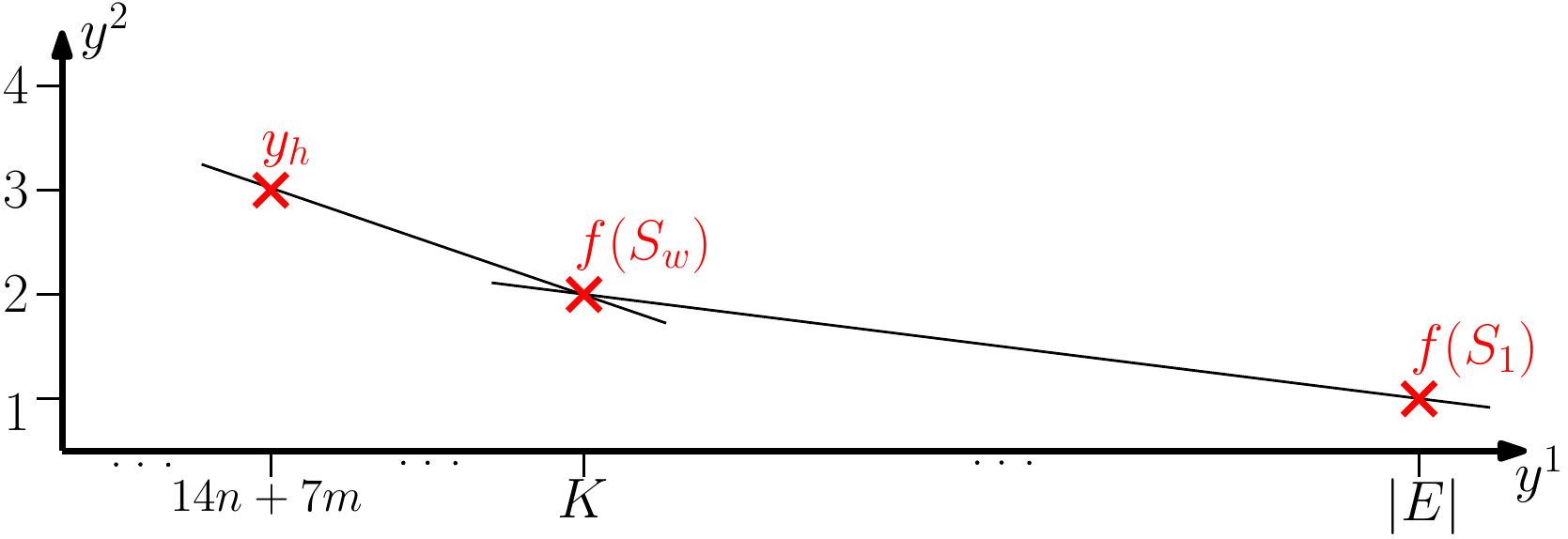}
\caption{The value vectors $y_h,f(S_w), f(S_1)$ and slopes visualized.}
\label{apendix:figure:WSS_Punkte}
\end{figure}
\end{center}
\end{appendix}
\end{document}